\documentclass[11pt]{article}
\usepackage{amsmath}
\usepackage{amsfonts}
\usepackage{amsthm}
\usepackage{amssymb}
\usepackage{fullpage}
\usepackage{dsfont}
\usepackage{hyperref}
\usepackage{graphicx}
\usepackage{color}

\newcommand{\trace}{{\rm Tr}}


\newcommand{\set}[1]{{\left\{#1\right\}}}    

\newcommand{\ve}[1]{\mathbf{#1}}
\newcommand{\abs}[1]{\left\lvert #1 \right\rvert}

\newcommand{\complex}{{\mathbb C}}
\newcommand{\reals}{{\mathbb R}}

\newcommand{\nats}{{\mathbb N}}

\def\ket#1{ | #1 \rangle}


\newcommand{\braket}[2]{\mbox{$\langle #1  | #2 \rangle$}}

\newcommand{\spa}[1]{\mathcal{#1}}

\newcommand{\class}[1]{{\rm #1}}

\newcommand{\poly}{\operatorname{poly}}

\newcommand{\n}[1]{\mathcal{N}(#1)}
\newcommand{\nc}[1]{\mathcal{N}(#1)^\perp}

\newcommand{\gtnit}{{\rm gTNZ}}
\newcommand{\tnit}{{\rm TNZ}}
\newcommand{\tnitp}{{\rm TNZ+}}

\newtheorem{theorem}{Theorem}

\newtheorem{lemma}{Lemma}

\newtheorem{defn}{Definition}

\newtheorem{cor}[theorem]{Corollary}
\newtheorem{problem}[theorem]{Problem}

\newtheorem{observation}{Observation}

\begin{document}
\title{Tensor network non-zero testing}
\author{
Sevag Gharibian\footnote{Simons Institute for Theoretical Computing, University of California, Berkeley, CA 94720, U.S.A.}$^{~,\dagger}$ \and Zeph Landau$^{\ast,\dagger}$ \and Seung Woo Shin\footnote{Electrical Engineering and Computer Sciences, University of California, Berkeley, CA 94720, U.S.A.} \and Guoming Wang$^\dagger$
}

\date{\today}

\maketitle
\begin{abstract}
Tensor networks are a central tool in condensed matter physics. In this paper, we study the task of tensor network non-zero testing (\tnit): Given a tensor network $T$, does $T$ represent a non-zero vector? We show that \tnit~is not in the Polynomial-Time Hierarchy unless the hierarchy collapses. We next show (among other results) that the special cases of \tnit~on non-negative and injective tensor networks are in NP. Using this, we make a simple observation: The commuting variant of the MA-complete stoquastic $k$-SAT problem on $D$-dimensional qudits is in NP for $k\in O(\log n)$ and $D\in O(1)$. This reveals the first class of quantum Hamiltonians whose commuting variant is known to be in NP for all (1) logarithmic $k$, (2) constant $D$, and (3) for arbitrary interaction graphs.
\end{abstract}

\section{Introduction}

One of the central aims of condensed matter physics is the study of ground spaces of local Hamiltonians. Here, a \emph{$k$-local Hamiltonian} is a sum $H=\sum_i H_i$ of Hermitian operators $H_i$, each of which act non-trivially on subsets of $k$ (out of a total of $n$) qudits. Such operators typically govern the evolution of quantum systems in nature, and in particular, their \emph{ground space} (i.e.~the eigenspace of $H$ corresponding to its smallest eigenvalue) characterizes the state of the corresponding quantum system at low temperature. Thus, the theoretical study of ground spaces of local Hamiltonians is crucial to understanding (e.g.) exotic phases of matter, such as superfluidity, which manifest themselves at low temperatures.

To this end, one of the key tools used by the condensed matter physics community is that of \emph{tensor networks} (see e.g.~Reference~\cite{CV09} for a survey). Specifically, tensor networks allow one to succinctly represent certain non-trivially entangled quantum states. As such, they play a crucial role in the study of ground spaces of local Hamiltonians. For example, in the early 1990's, White developed the celebrated DMRG heuristic~\cite{W92, W93}, which is nowadays recognized~\cite{OR95,RO97,VPC04, VMC08, WVSCD09} as a variational algorithm over 1D tensor networks known as Matrix Product States (MPS). The intuitive reason why DMRG works so well is that for 1D gapped Hamiltonians, the unique ground state turns out to be well-approximated by an MPS~\cite{Ha07}. Due in part to the success of DMRG, over the last two decades, a number of generalizations of MPS to higher dimensions have also been developed, such as Projected Entangled Pair States (PEPS)~\cite{VC04,VWPC06} and Multiscale Entanglement Renormalization Ansatz (MERA)~\cite{V07,V08}; such networks are able to represent larger classes of entangled states. Unfortunately, with this additional expressive power comes a price: Contracting an arbitrary tensor network is $\#$P-complete~\cite{SWVC07}. (Here, \emph{contracting} a network roughly means determining its value on a given input.)

\paragraph{Motivation.} Given that tensor networks play a fundamental role in condensed matter physics, and that contracting general networks is $\#$P-complete, here we ask a simpler question: \emph{Given a tensor network $T$, how difficult is it to decide whether $T$ represents a non-zero vector?}

Our original motivation for studying this question came from the following well-known open problem: Given a $k$-local Hamiltonian whose terms pairwise commute, what is the complexity of estimating its ground state energy? This is known as the commuting $k$-local Hamiltonian problem ($k$-CLH). Note that although asking for local terms to commute may intuitively make the problem seem ``classical'', such Hamiltonians can nevertheless have highly entangled ground states with exotic properties such as topological order~\cite{Kit03}.

For general $k$ and local dimension $d$, the best known upper bound on $k$-CLH is Quantum-Merlin-Arthur (QMA). However, the following special cases are known to be in NP: $k=2$ for local dimension $d\geq 2$~\cite{BV05}, $k=3$ with $d=2$ (as well as $d=3$ with a ``nearly Euclidean'' interaction graph)~\cite{AE11}, $k=4$ with $d=2$ on a square lattice~\cite{S11}, special cases of $k=4$ on a square lattice with $d$ polynomial in the number of qudits~\cite{H12_2}, and the case where the interaction graph is a good locally-expanding graph~\cite{AE13_1}. In particular, implicit in the approach of Schuch~\cite{S11} is a simple tensor network representation $T$ of the ground space of any commuting $k$-local Hamiltonian $H$; thus, the ability to verify in NP whether $T$ is non-zero would place $k$-CLH into NP for $k\in O(\log n)$ and $d\in O(1)$.

Finally, the question of whether a tensor network is non-zero has a fundamental physical interpretation: In 1967, Penrose showed~\cite{P67} that a spin network has non-zero norm if and only if it represents a physical situation allowed by the laws of quantum mechanics.

\paragraph{Results.} The decision problem we study in this paper is formally stated as follows. Below, $m$ denotes the number of physical edges in the network, each of which is assumed to have dimension $d$. (See Section~\ref{scn:def} for definitions.)

\begin{problem}[Generalized Tensor Network Non-Zero Testing (\gtnit)]\label{def:GTNIT}
    Given a classical description of a tensor network $T:[d]^m\mapsto\complex$ and threshold parameters $\alpha\geq \beta\geq0$ such that $\alpha-\beta\geq 1$,
    \begin{itemize}
        \item if there exists an input $x\in [d]^m$ such that $\abs{T(x)}\geq \alpha$, output YES, and
        \item if for all $x\in [d]^m$, $\abs{T(x)}\leq \beta$, output NO.
    \end{itemize}
\end{problem}

\noindent For convenience, we use the shorthand \tnit~to refer to \gtnit~with parameters $\alpha=1$ and $\beta=0$. Note that the key parameter here is $\beta=0$, and there is no loss of generality in setting $\alpha=1$. This is because in this paper, we assume the entries of the input tensor network $T$ are specified as complex numbers with rational real and imaginary parts. Since the value of $T$ on any input is given by a polynomial in the entries of the nodes with $d^n$ terms, it follows that the gap in any instance of \tnit~can be trivially amplified to $1$ by multiplying $T$ by an appropriate scalar based on the size of the network and the precision used to encode $T$.

Our main results are as follows.
\begin{enumerate}
    \item (Theorem~\ref{thm:hard}) \gtnit~is $\#P$-hard.
    \item (Theorem~\ref{thm:tnithard}) \tnit~is not in $\Sigma_i^p$ unless the Polynomial Hierarchy (PH) collapses to $\Sigma_{i+2}^p$. Here, $\Sigma_i^p$ denotes the $i$th level of PH (see Section~\ref{scn:def} for definitions).
    \item (Theorem~\ref{thm:tnit+}) \tnit~with the additional restriction that $T$'s nodes contain only non-negative entries is NP-complete, even when $T$ is given by a $3$-regular graph with edges of bond dimension $3$.
    \item (Theorem~\ref{thm:injnonzero} and Theorem~\ref{thm:noinj}) If $T$'s nodes represent injective linear maps, then $T$ is non-zero. Conversely, there exists a non-zero tensor network $T$ which does not have a ``geometrically equivalent'' injective tensor network representation $T'$. This implies that injective networks cannot exactly represent a state with long-range correlations (Observation~\ref{obs:longrange}).
    \item (Lemma~\ref{l:reduction} and Corollary~\ref{cor:stoq}) The commuting variant of the Stoquastic Quantum $k$-SAT problem is in NP for any $k\in O(\log n)$ and local dimension $D\in O(1)$.
\end{enumerate}

\noindent\emph{Remarks on previous work:} The non-commuting variant of the Stoquastic Quantum $k$-SAT problem is MA-complete~\cite{BT09}. Tensor network non-zero testing has been previously studied, e.g., in~\cite{MB12,JBCJ13,BMT14}, in related but distinct contexts. For example, Reference~\cite{MB12} shows that the following decision problem is undecidable: Given a set $F$ of quantum gates, does every network built using the gates in $F$ (where each gate is allowed one postselection) have non-zero norm? Another problem related to non-zero testing (in that it also concerns the physicality of a given tensor network) has been studied, for example, in Reference~\cite{KGE14}; the latter shows that determining whether a 1D translationally invariant Matrix Product Operator corresponds to a positive semidefinite operator can be either NP-hard or undecidable, depending on how the problem is phrased. Injective tensor networks have previously been studied in unrelated contexts, e.g., in the translationally invariant case in~\cite{PSGWC10}, or in the setting of $G$-injectivity~\cite{SCP10}.

\paragraph{Significance.} Although we do not fully resolve the complexity of the commuting local Hamiltonian problem ($k$-CLH), the strength of our approach is that, to the best of our knowledge, our line of attack on $k$-CLH is the first which does \emph{not} rely on Bravyi and Vyalyi's Structure Lemma~\cite{BV05}. In fact, it is purely this novel viewpoint which allows us to easily place the Stoquastic Quantum $k$-SAT problem into NP for any $k\in O(\log n)$ (Corollary~\ref{cor:stoq}). Moreover, although Theorem~3 suggests that testing whether an \emph{arbitrary} tensor network is non-zero is unlikely to be in NP, it is entirely plausible that the simple structure of the specific network $T$ which arises in the context of $k$-CLH (see Lemma~\ref{l:reduction}) \emph{can} be exploited to allow non-zero verification in NP.

Finally, as tensor networks are an important tool in condensed matter physics, it is crucial to understand their strengths and limitations. Result (2) shows that even the simple task of determining whether a given network $T$ represents a non-zero vector is in general very difficult. This underscores the need for cleverly designed classes of tensor networks such as MERA, which both manage to represent physically meaningful states, as well as allow efficient computation of local expectation values. To this end, we hope that our findings help guide the search for new key properties which make certain classes of tensor networks ``manageable''. For example, the fact that \tnit~on non-negative or injective networks lies in NP suggests that perhaps there are other physically relevant types of computations which can be performed on such networks ``easily'' (i.e.~in a complexity class below $\#$P).

\paragraph{Organization of this paper.} This paper is organized as follows. In Section~\ref{scn:def}, we formally define tensor networks and the Polynomial-Time Hierarchy. Section~\ref{sscn:hardness} shows complexity-theoretic hardness results for \tnit. In Section~\ref{sscn:easier}, we study easier special cases of \tnit~which fall into NP, such as non-negative and injective tensor networks. Section~\ref{scn:apps} discusses applications of \tnit~to the commuting local Hamiltonian problem. We conclude with open questions in Section~\ref{scn:conclusion}.

\paragraph{Notation.}
We define $[n] := \set{1,\ldots ,n}$. Let $\reals^+$ and $\nats$ denote the sets of non-negative real numbers and natural numbers, respectively. For operator $A:V\mapsto W$, let $\n{A}$ and $\nc{A}$ denote the null space of $A$ and the orthogonal complement of $\n{A}$, respectively. The notation $\spa{U}(V)$ denotes the set of unitary operators mapping $V$ to itself.

\section{Definitions}\label{scn:def}
In this section, we introduce definitions used throughout this article. We begin with a brief introduction to tensor networks, which are useful in condensed matter physics.

\paragraph{Tensor Networks.} There are two views of tensor networks we discuss here: The \emph{vector} and \emph{linear map} views. To introduce the first, we begin by thinking of a tensor $M(i_1,\ldots,i_k)$ simply as a $k$-dimensional array; given inputs $i_1$ through $i_k$, $M$ outputs a complex number. We call such an object $M:[d_1]\times\cdots\times[d_k]\mapsto\complex$ a \emph{$k$-dimensional tensor}, where $d_i\in\nats$. 
Given two tensors, it is possible to ``compose'' them by ``matching up'' certain inputs; this is called \emph{edge contraction}, and is best depicted via a simple but powerful graph theoretic framework, shown in Figure~\ref{fig:disp}~\cite{GHLS14}. In Figure~\ref{fig:disp}(a), the vertex corresponds to the tensor $M$, and each edge corresponds to one of the input parameters or \emph{indices} of $M$.
\begin{figure}[t]\centering
  \includegraphics[height=3cm]{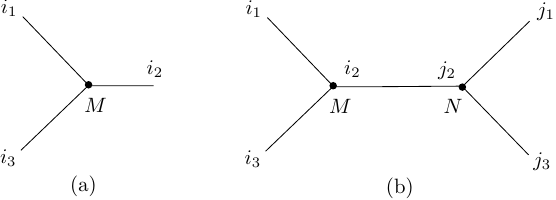}
  \caption{(a) A single tensor $M(i_1,i_2,i_3)$. (b) Two tensors $M(i_1,i_2,i_3)$ and $M(i_1,i_2,i_3)$ contracted on the edge $(M,N)$, yielding tensor $P(i_1,i_3,j_1,j_3)=\sum_{k}M(i_1,k,i_3)N(j_1,k,j_3)$.}\label{fig:disp}
\end{figure}
In Figure~\ref{fig:disp}(b), the edge $(M,N)$ denotes the contraction of $M$ and $N$ on their second index, the result of which is a $4$-dimensional tensor $P$ defined as
\[
    P(i_1,i_3,j_1,j_3)=\sum_{k}M(i_1,k,i_3)N(j_1,k,j_3).
\]
Since $P$ is $4$-dimensional, i.e.~has $4$ inputs, it is depicted as having four ``legs'' (i.e.~edges with only one endpoint) in Figure~\ref{fig:disp}(b).

By composing multiple tensors, we obtain a \emph{tensor network}.
\begin{figure}[t]\centering
  \includegraphics[height=3.8cm]{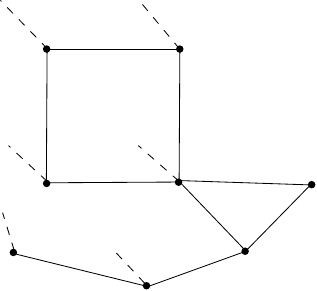}
  \caption{An arbitrary tensor network. The dashed edges denote physical edges, i.e.~inputs to the network, while the solid edges denote virtual edges, i.e.~contractions.}\label{fig:disp2}
\end{figure}
Figure~\ref{fig:disp2} depicts such a network. Here, open edges or legs are called \emph{physical} edges, whereas contracted edges are called \emph{virtual} edges. These names are physically motivated as follows. Recall that thus far, we have defined tensors as multi-dimensional arrays. The network $T$ in Figure~\ref{fig:disp2} is such an array taking in $6$ inputs $(x_1,\ldots,x_6)$; for each set of inputs, $T$ outputs a complex number $\alpha_{\ve{x}}$. The name \emph{vector} view now follows: $T$ can be thought of as representing a vector $\ket{v_T}$ such that given computational basis state $\ve{x}$, $T$ outputs amplitude $\alpha_{\ve{x}}$, i.e.~$\ket{v_T}=\sum_{x\in\set{0,1}^6}\alpha_x\ket{x}$. Why the names \emph{physical} and \emph{virtual} edges then? Typically in condensed matter physics, one thinks of the vertices in $T$ as corresponding to $d$-dimensional quantum systems. Then, each node of $T$ would have a physical edge of dimension $d$. The contracted edges, on the other hand, represent entanglement between systems; as such, they are called virtual edges. Their dimension $D$ is an important parameter known as the \emph{bond dimension} of the network.

Some further terminology: A network without physical edges is called a \emph{closed} network, and represents a complex number which can be computed by contracting the network. Given a closed network, a \emph{labeling} of its (virtual) edges means setting each index of every tensor to some fixed value, such that indices sharing an edge are set to the same value.

Next, we present the linear map view of tensor networks, which is perhaps best illustrated via the network $P$ in Figure~\ref{fig:disp}(b). In this view, rather than thinking of all $4$ physical edges as being \emph{inputs}, we can instead partition them into a set of inputs (say, edges $i_1$ and $i_3$) and a set of outputs (say, edges $j_1$ and $j_3$). Fix some values to inputs $i_1$ and $i_3$. Then, the result is a new network $P'$ with two remaining physical edges, $j_1$ and $j_3$. But $P'$ can now be thought of as a vector with inputs $j_1$ and $j_3$, just as in our first viewpoint! In other words, any input $(x,y)\in[d]\times[d]$ to $i_1$ and $i_3$ is mapped to a $d^2$-dimensional vector on inputs $j_1$ and $j_3$. By extending this action linearly over all basis states $(x,y)\in[d]\times[d]$, we have that $P$ acts as a linear map from inputs $i_1$ and $i_3$ to outputs $j_1$ and $j_3$, as claimed.

Finally, for clarity, we remark that in the definition of \gtnit~(Definition~\ref{def:GTNIT}), the input network is given by describing the connections between the nodes in the network, as well as the entries of each node/tensor (complex numbers are represented via a rational real and imaginary part).

\paragraph{The Polynomial-Time Hierarchy.} The Polynomial-Time Hierarchy (PH)~\cite{MS72} is defined as the union $\bigcup_i\Sigma_i^p$, where $\Sigma_i^p$ is defined as follows.
\begin{defn}[$\Sigma_i^p$]\label{def:PH}
    A decision problem $\Pi$ is in $\Sigma_i^p$ if there exists a polynomial time Turing machine $M$ such that given instance $x$ of $\Pi$,
    \[
        x\text{ is a YES instance } \Longleftrightarrow \exists y_1 \forall y_2 \exists y_3 \cdots Q_i y_i \text{ s.t. $M$ accepts }(x,y_1,\ldots,y_i),
    \]
    where $Q_i=\exists$ if $i$ is odd, and $Q_i=\forall$ if $i$ is even, and the $y_i$ are polynomial-length strings or \emph{proofs}.
\end{defn}

\section{Complexity of \tnit}\label{scn:complexity}

In this section, we show complexity-theoretic hardness of \tnit~(Section~\ref{sscn:hardness}), as well as study special cases of \tnit~which fall into NP (Section~\ref{sscn:easier}).

\subsection{Hardness of \tnit}\label{sscn:hardness}
Tensor networks are powerful objects; recall that simply contracting an arbitrary network $T$ is $\#$P-complete~\cite{SWVC07}. Thus, here we ask the natural question: Is \tnit~easier? For the general problem \gtnit, it is easy to answer this question in the negative using standard techniques by showing a polynomial time Turing reduction from the $\class{\#P}$-complete problem $\class{\#3COLORING}$, as we do now in Theorem~\ref{thm:hard} below. Recall that a \emph{$3$-coloring} of an undirected graph $G$ is an assignment of one of three colors to each vertex of $G$, such that neighboring vertices of $G$ are assigned distinct colors. In $\class{\#3COLORING}$, one is given a graph $G$ and asked to output the number of 3-colorings $M$ in $G$. Interestingly, it is known that $\class{\#3COLORING}$ is $\class{\#P}$-complete even when the degree of the graph is bounded by $4$ \cite{GJS74}, which will be important in the construction below. We remark that the construction here was used in~\cite{AL10} to show $\class{\#P}$-hardness of contracting tensor networks; the proof below simply appends a binary search to this construction, but we include a full description of the reduction for completeness.

\begin{theorem}\label{thm:hard}
    There exists a polynomial-time Turing reduction from $\class{\#3COLORING}$ to \gtnit.
\end{theorem}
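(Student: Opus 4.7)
The plan is to build, in polynomial time, a closed tensor network $T_\phi$ whose contraction value equals exactly the number $M$ of satisfying assignments of the input 2-CNF $\phi$, and then to recover $M$ via binary search using $O(n)$ queries to a \gtnit~oracle.

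I would first construct $T_\phi$ from two kinds of local tensors. For each variable $x_i$ appearing in $k_i$ clauses, I place a ``copy'' (delta) tensor $\delta_{k_i}(b_1,\ldots,b_{k_i})=1$ if $b_1=\cdots=b_{k_i}$ and $0$ otherwise; contracting on this tensor both sums over the two possible truth assignments of $x_i$ and enforces that every occurrence of $x_i$ in the network receives the same value. For each clause $C_j=l_{j1}\vee l_{j2}$, I place a $2\times 2$ tensor whose entry is $1$ precisely on the assignments of $(l_{j1},l_{j2})$ satisfying $C_j$, and $0$ otherwise. Connecting each leg of the variable tensor for $x_i$ to the matching leg of the clause tensor in which that occurrence lives yields a closed network evaluating to
\[
    T_\phi \;=\; \sum_{\sigma\in\set{0,1}^n}\,\prod_j C_j(\sigma) \;=\; M.
\]
To ensure every tensor node has bounded arity (so it can be listed explicitly in polynomial time as a table of rational entries), I replace each high-arity copy tensor by a linear chain of $3$-arity copy tensors $\delta_3(a,b,c)=[a=b=c]$; contracting the internal edges of the chain reproduces $\delta_{k_i}$, and the total network size grows by only a linear factor.

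Next, I would extract $M$ from the \gtnit~oracle. Since $T_\phi$ is a closed network (no physical edges), it takes a single empty ``input'' and evaluates to the fixed integer $M\in\set{0,1,\ldots,2^n}$. Querying \gtnit~on $T_\phi$ with thresholds $\alpha=k$, $\beta=k-1$ therefore returns YES iff $M\geq k$ and NO iff $M\leq k-1$; because $M$ is an integer and $\alpha-\beta=1$, these two cases cover every possible value of $M$ and the oracle always returns a definite answer. Binary searching over $k\in\set{0,1,\ldots,2^n}$ thus pins down $M$ exactly using $O(n)$ oracle calls, completing the Turing reduction.

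The reduction is essentially mechanical once the closed-network construction is in place, so the only mild obstacle is the bookkeeping of (i) decomposing the high-arity copy tensors into bounded-arity ones without changing the contraction value, and (ii) verifying each intermediate \gtnit~instance is well-formed, i.e.\ has rational entries and satisfies $\alpha-\beta\geq 1$. Both are routine for the construction above, and the rest of the argument follows directly from the standard interpretation of $T_\phi$'s value as the $\class{\#2SAT}$ count.
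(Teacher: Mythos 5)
Your proof is correct and follows the same high-level strategy as the paper: build a closed tensor network whose contraction value equals the number $M$ of satisfying assignments (variable nodes enforce consistency, clause nodes enforce satisfaction), then recover $M$ by binary search over a \gtnit~oracle. Two small points where you differ from the paper, both to your advantage. First, your threshold choice is cleaner: you query directly with $\alpha = k$, $\beta = k-1$, using that $M$ is an integer to guarantee the promise is met with $\alpha-\beta=1$. The paper instead rescales every node of $T$ by $(2^t/k)^{1/(|V|+|C|)}$ and then uses $\alpha=2^t$, $\beta=2^t(k-1)/k$; your version sidesteps that rescaling (and the attendant irrationality of the scalar) entirely since the thresholds, not the tensor entries, carry the adjustable part. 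Second, you explicitly replace the high-arity copy tensor $\delta_{k_i}$ by a chain of arity-3 delta tensors. The paper leaves $T_v$ at degree equal to the number of occurrences of $v$, which in the usual explicit-table representation would make that node's size exponential in the number of clauses containing $v$; your bounded-arity decomposition is the careful way to ensure the construction is genuinely polynomial-size, so this is a genuine refinement rather than mere bookkeeping.
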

\begin{proof}
    We first encode an arbitrary instance $G$ of $\class{\#3COLORING}$ into a (closed) tensor network $T$, such that contracting $T$ outputs the number of 3-colorings $M$, and then apply the standard idea of binary search to compute $M$ using a polynomial number of calls to a \gtnit~oracle.

	To construct $T$ from $G$, let $V$ and $E$ denote the sets of vertices and edges in $G$, respectively. For each vertex $v\in V$ and edge $e\in E$, we create nodes $T_v$ and $T_e$ in our tensor network, respectively. If vertex $v\in V$ is incident to edge $e\in E$, we connect $T_v$ and $T_e$ by an edge in the tensor network. Thus, the degree of $T_v$ is the number of edges that $v$ is incident to (which can be assumed to be at most $4$ \cite{GJS74}; this ensures $T_v$ has a constant size description), and the degree of $T_e$ is precisely $2$. All edges have bond dimension $3$. Next, we specify the action of $T$'s nodes. Each $T_v$ outputs $1$ if all inputs to this node agree, and outputs $0$ otherwise. This enforces $T_v$ to correspond to a consistent assignment to vertex $v$. On the other hand, each $T_e$ outputs $1$ if its two inputs disagree, and outputs $0$ otherwise. This enforces that the two incident vertices of $e$ have different colors. It is easy to see that the contraction of $T$ yields $M$, since each edge labeling of the network corresponding to a 3-coloring contributes $1$ to the sum.

    Given $T$, to now use an oracle for \gtnit~to compute $M$, note that for any positive integer $k$, solving \gtnit~on input $(T,k,k-1)$ allows us to determine if $M\geq k$ or $M \leq k-1$. Thus, since the number of colorings is at most $3^{\abs{V}}$, by invoking \gtnit~at most $\lceil \abs{V}\log 3\rceil$ times in conjunction with binary search, we can determine $M$ efficiently.
%
\end{proof}

Theorem~\ref{thm:hard} tells us that general instances of \gtnit~are highly unlikely to be tractable. However, the proof relies critically on the ability to set the thresholds $\alpha$ and $\beta$ as needed. What if we fix $\alpha=1$ and $\beta=0$, i.e.~the case of \tnit? Clearly, the proof of Theorem~\ref{thm:hard} implies that this problem is at least NP-hard. Is it also in NP? The following theorem suggests not.

\begin{theorem}\label{thm:tnithard}
    If \tnit~is in $\Sigma_i^p$, then $\class{PH}\subseteq \class{P}^{\Sigma_{i+1}^p}$. In particular, this implies that the Polynomial Hierarchy collapses to the $(i+2)$-nd level.
\end{theorem}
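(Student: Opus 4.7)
I plan to prove Theorem~\ref{thm:tnithard} by exhibiting a polynomial-time many-one reduction from a $\class{coC_{=}P}$-complete problem to \tnit, and then combining the resulting inclusion with a short chain of counting-class containments and Toda's theorem. The complete problem I use is: given a 2-CNF formula $\phi$ and an integer $k$, decide whether $\#\phi\neq k$.

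The technical core is the reduction. Given $(\phi,k)$, I would construct a closed tensor network $T_{\phi,k}$ whose contraction equals exactly $\#\phi-k$; a \tnit~query on this network (padded with a trivial dimension-$1$ physical leg to match the problem definition) then decides whether $\#\phi\neq k$. To build $T_{\phi,k}$, I adapt the network $T_\phi$ of Theorem~\ref{thm:hard} (which contracts to $\#\phi$) via a block-diagonal extension: enlarge every edge's bond dimension from $2$ to $3$ by adjoining a new symbol $*$, and extend each node $w$'s tensor so that (i) on labelings where all of $w$'s incident edges take values in $\{0,1\}$ the tensor reproduces the original $T_\phi$ behavior; (ii) on the unique labeling where all of $w$'s incident edges equal $*$ the tensor outputs a scalar $c_w$; and (iii) on every other (mixed) labeling it outputs $0$. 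The full contraction then splits cleanly into ``all-original'' labelings summing to $\#\phi$, a unique ``all-$*$'' labeling contributing $\prod_w c_w$, and mixed labelings that vanish, since any mixed global labeling forces at least one node into case (iii). Choosing the $c_w$'s with $\prod_w c_w=-k$---for instance, $c_w=1$ except at a single distinguished node set to $-k$---gives contraction $\#\phi-k$, as required.

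Assume now that \tnit$\in\Sigma_i^p$. The reduction gives $\class{coC_{=}P}\subseteq\Sigma_i^p$, hence $\class{C_{=}P}\subseteq\Pi_i^p$. Using the identity ``$\#\phi\geq k \iff \exists\,j\in\{k,\ldots,2^n\}$ with $\#\phi=j$'' (guess $j$ with $\class{NP}$ and verify with a single $\class{C_{=}P}$ query) yields $\class{PP}\subseteq\class{NP}^{\class{C_{=}P}}\subseteq\class{NP}^{\Pi_i^p}=\Sigma_{i+1}^p$. Toda's theorem then closes the argument: $\class{PH}\subseteq\class{P}^{\class{PP}}\subseteq\class{P}^{\Sigma_{i+1}^p}=\Delta_{i+2}^p\subseteq\Sigma_{i+2}^p$.

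The main obstacle I anticipate is verifying that the block-diagonal extension has a polynomial description while consistently routing the two modes. Consistency of the case split is node-local and immediate, but one must be careful when $T_\phi$ contains high-degree variable nodes, whose tensors would otherwise grow with the degree; standard tensor-network preprocessing (replacing each such copy-tensor node by a tree of degree-$3$ copy tensors) reduces every node's degree to a constant before the extension, keeping the reduction polynomial-time. All subsequent counting-class inclusions are routine.
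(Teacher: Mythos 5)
Your proof is correct and rests on the same two technical pillars as the paper's argument --- a block-diagonal ``switch'' extension of the $\class{\#2SAT}$ network that subtracts a constant from its contraction value, followed by Toda's theorem --- but it packages the middle step differently in a way worth noting. The paper proves Lemma~\ref{l:add} (adding a constant $N$ to a closed network's value by enlarging each bond dimension by one) and then shows in Lemma~\ref{l:verifysharp} that the $\class{PP}$-complete problem $L$ (``does $\phi$ have at least $k$ satisfying assignments?'') lies in $\class{NP}^{\tnit}$, by nondeterministically guessing $k'\geq k$ and querying whether the value $\#\phi-k'$ is zero; the final chain is $\class{PH}\subseteq\class{P}^{\class{\#P}}\subseteq\class{P}^{L}\subseteq\class{P}^{\class{NP}^{\tnit}}\subseteq\class{NP}^{\class{NP}^{\Sigma_i^p}}=\Sigma_{i+2}^p$. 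You instead observe that the same subtraction gadget gives a clean \emph{many-one} reduction from the $\class{coC_{=}P}$-complete problem ``$\#\phi\neq k$'' directly to \tnit, so that $\tnit\in\Sigma_i^p$ immediately yields $\class{C_{=}P}\subseteq\Pi_i^p$, and then you reinsert the nondeterministic guess (choose $j\geq k$ and test $\#\phi=j$) to get $\class{PP}\subseteq\class{NP}^{\class{C_{=}P}}\subseteq\Sigma_{i+1}^p$, landing at the same $\Sigma_{i+2}^p$ via Toda. The two routes are equivalent in strength; yours has the expository advantage of isolating a crisp many-one hardness statement (``\tnit~is $\class{coC_{=}P}$-hard'') that is of independent interest and could plausibly be the starting point for a tighter collapse. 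You also explicitly flag the need to split high-degree variable (copy) nodes of $T_\phi$ into trees of bounded-degree copy tensors before enlarging the bond dimension --- a genuine technicality the paper's Lemma~\ref{l:add} leaves implicit --- and your choice of integer weights $c_w$ correctly respects the $\alpha-\beta\geq 1$ promise gap.
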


To show Theorem~\ref{thm:tnithard}, we require two lemmas.

\begin{lemma}\label{l:add}
    Let $T$ be a closed tensor network on $n$ nodes and $m$ edges, where edge $i$ has bond dimension $d_i$ for $i\in[m]$, and such that the contraction of $T$ outputs value $M\in \complex$. Then, for any $N\in\complex$, one can construct in (deterministic) polynomial time a closed tensor network $T'$ satisfying the following properties:
    \begin{itemize}
        \item Contracting $T'$ outputs $M+N$, and
        \item $T'$ has $n$ nodes and $m$ edges, where edge $i$ has bond dimension $d_i+1$.
    \end{itemize}
\end{lemma}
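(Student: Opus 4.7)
The plan is to augment each edge of $T$ by one extra index value (a ``flag'') and define the new node tensors so that the only labelings of $T'$'s edges with nonzero contribution are the fully original labelings---which reproduce $M$---and a single fully flagged labeling---which contributes exactly $N$. Concretely, for each node $v$ of $T$ with tensor $T_v$, I would define the new tensor $T'_v$ (whose incident edges now have bond dimensions $d_{e_j}+1$) by three cases: $T'_v$ agrees with $T_v$ whenever every incident edge at $v$ takes a value in its original range; $T'_v$ equals a scalar $c_v \in \complex$ whenever every incident edge at $v$ takes its new flag value; and $T'_v=0$ on any ``mixed'' assignment where some incident edges are in the original range and others take the flag. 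The scalars $c_v$ are chosen so that $\prod_v c_v = N$, for instance by setting $c_{v^*}=N$ for one distinguished node $v^*$ and $c_v=1$ elsewhere.

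Correctness then follows from a simple connectivity argument. For any labeling of $T'$'s edges, let $F$ denote the set of edges that receive the flag value. If $F$ is neither empty nor the full edge set, then (assuming $T$'s underlying graph is connected) some node must be incident to both a flagged and a non-flagged edge, and the ``mixed'' rule forces that term to vanish. Thus the only surviving contributions come from $F=\emptyset$, which reproduces the original contraction summing to $M$, and $F$ being the full edge set, which contributes $\prod_v c_v = N$. The resulting network $T'$ therefore contracts to $M+N$, has $n$ nodes and $m$ edges with bond dimensions $d_i+1$, and is manifestly constructible in deterministic polynomial time.

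The main subtlety I anticipate is handling the case when $T$ is disconnected, since then labelings that flag entire connected components in isolation also survive the above rules. If $T$ has components $C_1,\ldots,C_k$ contracting to $M_1,\ldots,M_k$ with $M=\prod_j M_j$, I would apply the flag construction independently to each component, inducing per-component flag contributions $N_j$, and then solve $\prod_j(M_j+N_j)=M+N$---for example, taking $N_j=0$ for $j\geq 2$ and choosing $N_1$ accordingly when all $M_j$ are nonzero, with minor ad hoc adjustments for degenerate cases where some $M_j=0$. The remainder of the argument is a direct verification of the node count, edge count, bond dimensions, and polynomial-time constructibility.
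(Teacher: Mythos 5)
Your construction is essentially the same as the paper's: you augment each edge by one ``flag'' (the paper calls it ``SWITCH'') value, make the new node tensors zero out any mixed assignment, act as the original tensor when no incident edge is flagged, and output a scalar $c_v$ when all incident edges are flagged, with $\prod_v c_v = N$; the paper makes the specific choice $c_{v^*}=N$, $c_v=1$ otherwise. The correctness argument via connectivity is also identical to the paper's (implicit) one, so the core of your proposal matches the published proof.

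Your additional discussion of disconnected networks is a legitimate observation --- the paper's proof quietly assumes the underlying graph is connected, and if it is not, labelings that flag one component but not another also survive and spoil the identity. However, the fix you sketch does not actually work within the lemma's constraints: to choose the per-component offsets $N_j$ so that $\prod_j(M_j+N_j)=M+N$, you need to know the component values $M_j$, and computing those is exactly tensor network contraction, which is $\#$P-hard --- so this is not a deterministic polynomial-time construction. (There is also no clean repair when some $M_j=0$, as you note.) The right resolution is not to patch the disconnected case but to restrict the lemma to connected closed networks; this is all that is needed in its one application (Lemma~\ref{l:verifysharp}), since the $\#2\class{SAT}$ encoding of Theorem~\ref{thm:hard} can be taken to be connected without loss of generality (e.g., by restricting to formulas whose variable--clause incidence graph is connected, or by padding with a trivial clause). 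As written, your disconnected-case argument is a gap, not a generalization.
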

\begin{proof}
     We construct $T'$ as follows. For any pair of nodes $v$ and $w$ in $T$ connected by edge $e=(v,w)$, we increase the bond dimension $d_e$ of $e$ by $1$; this extra dimension will play the role of a ``switch''. In particular, whenever $e$ is labeled with this ``switch'' value, we will say edge $e$ is set to SWITCH. To now describe how the vertices act on this extra dimension, fix some arbitrary node $v^*$, and relabel each node $v$ as $v'$. Then, in our new network $T'$, the action of each $v'$ is as follows:
    \begin{itemize}
        \item If all edges incident on ${v'}$ are not set to SWITCH, then ${v'}$ acts identically to $v$.
        \item Else, if there exists a pair of edges incident on $v'$, such that precisely one edge is set to SWITCH, then $v'$ outputs $0$.
        \item Else, if $v'=v^*$, then $v'$ outputs $N$. If $v'\neq v^*$, then $v'$ outputs $1$.
    \end{itemize}
    Thus, in $T'$ there are only two ways to label all edges to obtain a non-zero value. The first is when all edges are not set to SWITCH; contracting over all such labellings contributes value $M$ to the sum. The second is when all edges are set to SWITCH; in this case, $N$ is added to the sum. Thus, $T'$ outputs $M+N$, as desired.
\end{proof}

\begin{lemma}\label{l:verifysharp}
    Given a graph $G$ with $n$ vertices and non-negative integer $k$, let $L$ denote the problem of deciding whether $G$ has at least $k$ 3-colorings. Then, $L\in\class{NP}^{\tnit}$.
\end{lemma}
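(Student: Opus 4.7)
The plan is to combine the $\class{\#2SAT}$-to-tensor-network encoding from the proof of Theorem~\ref{thm:hard} with the additive-offset gadget of Lemma~\ref{l:add}, and then decide the problem with a single non-deterministic guess followed by a single call to the \tnit~oracle.

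Concretely, I would proceed as follows. First, apply the construction in the proof of Theorem~\ref{thm:hard} to $\phi$ to obtain in polynomial time a closed tensor network $T$ whose contraction equals the number of satisfying assignments $M\in\nats$. The $\class{NP}$ machine then non-deterministically guesses an integer $m^*\in\set{k,k+1,\ldots,2^n}$, encoded in its (polynomial-length) binary representation. Invoking Lemma~\ref{l:add} with $N=-m^*$, we build in polynomial time a closed tensor network $T'$ whose contraction equals $M-m^*$. Finally, the machine queries the \tnit~oracle on $T'$ and accepts if and only if the oracle answers NO.

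For correctness, note that both $M$ and $m^*$ are integers, so $M-m^*$ is an integer whose absolute value is either $0$ (when $M=m^*$) or at least $1$ (otherwise), which matches precisely the $\alpha=1,\beta=0$ promise of \tnit. In the YES case ($M\geq k$), the witness $m^*=M$ lies in the allowed range and makes the contraction of $T'$ equal to $0$, so the oracle returns NO and the machine accepts on that branch. In the NO case ($M<k$), every legal witness satisfies $m^*\geq k>M$, so $T'$ contracts to a non-zero integer, the oracle returns YES, and every branch rejects.

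The main subtlety to verify is that the offset network sits inside the \tnit~promise gap. Unlike the reduction in Theorem~\ref{thm:hard}, no multiplicative rescaling of the tensors is required here, because the integrality of $M-m^*$ already supplies the unit gap for free; the only bookkeeping is to confirm that Lemma~\ref{l:add} can be instantiated with a rational complex $N$ of polynomial bit-length (which it can, since $m^*$ has at most $n$ bits). I do not anticipate any substantial obstacle beyond this.
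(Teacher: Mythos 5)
Your proposal is correct and follows essentially the same route as the paper: build the counting network $T$ from the Theorem~\ref{thm:hard} construction, non-deterministically guess an integer $m^*\in\{k,\ldots,2^n\}$, apply Lemma~\ref{l:add} with $N=-m^*$ to shift the contraction value, and accept iff the \tnit~oracle reports a zero network. Your extra remark that integrality of $M-m^*$ gives the unit gap for free is a correct (if implicit in the paper) observation and does not change the argument.
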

\begin{proof}
    Let $O$ denote an oracle deciding \tnit. We construct a non-deterministic Turing machine $M$ with access to $O$ which decides $L$ in polynomial time. Suppose $G$ has $0\leq k^*\leq 3^n$ 3-colorings, for $n$ the number of vertices in $G$. Then, the action of $M$ on input $(G, k)$ is as follows:
    \begin{enumerate}
        \item Non-deterministically guess a value $k'$ satisfying $k\leq k'\leq 3^n$.
        \item As done in the proof of Theorem~\ref{thm:hard}, construct a tensor network $T$ encoding $G$, i.e.~whose contraction yields value $k^*$.
        \item Using Lemma~\ref{l:add}, map $T$ to a network $T'$ whose contraction yields value $k^*-k'$.
        \item Call $O$ on input $T'$. If $O$ outputs YES, output NO. Else, output YES.
    \end{enumerate}
    We now prove correctness. First, if we have a YES instance of $L$, then in step 1, $M$ guesses $k'=k^*$. The network $T'$ then yields value $k^*-k'=0$ upon contraction, signifying that we have guessed correctly. Thus, oracle $O$ outputs NO, in which case we flip the answer to YES in step 4. Conversely, if we have a NO instance of $L$, then any guess $k\leq k'\leq 2^n$ made by $M$ in step 1 will yield a network $T'$ whose value yields $\abs{k^*-k'}\geq 1$. Thus, oracle $O$ outputs YES in step 4, and we flip the answer to NO. To complete the reduction, note that each step above runs in non-deterministic polynomial time.
\end{proof}

With Lemmas~\ref{l:add} and~\ref{l:verifysharp} in hand, we now prove Theorem~\ref{thm:tnithard}.

\begin{proof}[Proof of Theorem~\ref{thm:tnithard}]
    Let $O_L$ be an oracle deciding language $L$ in the statement of Lemma~\ref{l:verifysharp}. Then, note that
    \begin{equation}
        \class{P}^{\class{\#3COLORING}}\subseteq \class{P}^{L}.\label{eqn:pl}
    \end{equation}
    Indeed, this holds since any call to a $\class{\#3COLORING}$ oracle can be simulated in polynomial time by applying binary search in conjunction with the oracle $O_L$. Now, if $\class{\tnit}\in\Sigma_i^p$, we have by Lemma~\ref{l:verifysharp} that
    \begin{equation}\label{eqn:pl2}
        \class{P}^{L}\subseteq\class{P}^{\class{NP}^{\tnit}}\subseteq \class{P}^{\class{NP}^{\Sigma_i^p}}=
        \class{P}^{\Sigma_{i+1}^p}\subseteq
        \class{NP}^{\Sigma_{i+1}^p}=\Sigma_{i+2}^p.
    \end{equation}
    On the other hand, since $\class{\#3COLORING}$ is $\class{\#P}$-complete, we have that
    \begin{equation}\label{eqn:pl3}
        \class{P}^{\class{\#3COLORING}}=\class{P^{\class{\#P}}}\supseteq \class{PH},
    \end{equation}
    where the last containment is given by Toda's theorem~\cite{T91}, which states that $\class{PH}\subseteq \class{P^{\class{\#P}}}$. Combining Equations~(\ref{eqn:pl}),~(\ref{eqn:pl2}), and~(\ref{eqn:pl3}), the claim follows.
\end{proof}

\subsection{Easier instances of \tnit}\label{sscn:easier}

In general, Theorem~\ref{thm:tnithard} implies that it is highly unlikely for \tnit~to lie in PH. In contrast, in this section, we study special cases of \tnit~whose complexity is provably in NP.

\noindent\paragraph{Non-negative tensor networks.} The first case we consider is very simple, and yet finds a nice application in Section~\ref{scn:apps}: The case in which the input tensor network's nodes contain only non-negative real numbers. Call such networks \emph{non-negative}. Then, defining \tnitp~as the problem \tnit~with a non-negative tensor network as input, we have the following.

\begin{theorem}~\label{thm:tnit+}
    \tnitp~is in NP, and is NP-hard even when the input network $T$ is given by a $3$-regular graph with all edges of bond dimension $3$.
\end{theorem}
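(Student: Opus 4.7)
The plan is to prove the two claims — NP containment and NP-hardness with restricted structure — separately, with NP containment being the quick part and the hardness reduction requiring a bit of formula preprocessing.

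For NP membership, my approach will exploit the key structural consequence of non-negativity: once a physical input $x$ is fixed, the value $T(x)$ obtained by completely contracting the network decomposes as $T(x) = \sum_y \prod_v T_v(\cdot)$, where $y$ ranges over labelings of the virtual edges, $v$ ranges over nodes, and each factor $T_v(\cdot)$ is a non-negative real. Thus $T(x) > 0$ if and only if there is some joint edge labeling under which every node evaluates to a strictly positive number. This gives a natural NP certificate: the pair $(x,y)$ consisting of a physical input and a virtual edge labeling, which a polynomial-time verifier can check by evaluating each node's entry individually and accepting iff all are positive. Combined with the scaling argument described after the definition of \tnit, this places \tnitp~in NP.

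For the NP-hardness part, I plan to reduce from the variant of 3-SAT in which each variable occurs in exactly three clauses and each clause contains exactly three distinct literals; this variant remains NP-hard, via a standard preprocessing step applied to an arbitrary 3-SAT instance that splits each over-represented variable into several copies glued by an equality chain and pads under-represented variables with tautological clauses. Given such a formula $\phi$, I build a closed, bipartite tensor network with a node $V_v$ for each variable and a node $C_c$ for each clause, joined by an edge exactly when $v$ occurs in $c$. The underlying graph is then bipartite 3-regular by construction, and I declare every edge to have bond dimension 3. I reserve the values $0,1$ to encode truth assignments and treat the third value as a ``dead'' label that zeroes out tensors. The tensor at $V_v$ outputs $1$ iff its three incident labels coincide and lie in $\{0,1\}$ (enforcing consistency of the assignment across the three occurrences of $v$), and the tensor at $C_c$ outputs $1$ iff its three incident labels, after being negated per the signs of $c$'s literals, satisfy $c$. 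All entries are in $\{0,1\}$, hence non-negative.

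Putting the two parts together, by the NP-membership analysis, $T$ is non-zero iff there is an edge labeling under which every tensor evaluates to $1$, iff there is a truth assignment satisfying every clause of $\phi$, iff $\phi$ is satisfiable. The main obstacle I anticipate is not in the tensor network construction itself, which is entirely mechanical, but in the preprocessing step that forces the input 3-SAT instance into the exactly-regular form without altering satisfiability; this requires care in designing the equality chains that propagate a single truth value through multiple variable copies while keeping every new variable's occurrence count at exactly three. Once that is done, the rest of the reduction goes through essentially by definition.
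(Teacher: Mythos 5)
Your NP-membership argument is correct and matches the paper's: non-negativity means $T(x) > 0$ iff some virtual edge labeling makes every node's entry strictly positive, and the pair (physical input, virtual labeling) is an efficiently checkable certificate.

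Your NP-hardness reduction, however, has a fatal gap. The source problem you propose---3-SAT in which every clause has exactly three distinct literals and every variable occurs in exactly three clauses---is not NP-hard; in fact every such instance is satisfiable. This is Tovey's theorem: build the bipartite clause--variable incidence graph; each clause has degree exactly $3$ and each variable degree at most $3$, so Hall's condition holds and one can choose a system of distinct representatives assigning a private variable to each clause, which can then be set to satisfy that clause without conflict. Consequently the preprocessing step you anticipate (``splitting over-represented variables into copies glued by equality chains and padding under-represented variables'') cannot simultaneously force every variable to degree exactly $3$ and preserve NP-hardness; the moment you succeed at regularizing, the formula becomes trivially a YES instance. (The standard NP-hard regular variant, $(3,\mathrm{B2})$-SAT, has each variable occurring exactly four times, which would make your variable nodes degree $4$, breaking $3$-regularity.) The paper sidesteps this entirely by reducing from $3$-edge-colorability of simple $3$-regular graphs (NP-hard by Holyer), placing one $[3]^3 \to \{0,1\}$ tensor at each vertex that outputs $1$ iff its three incident edge labels are pairwise distinct; $3$-regularity and bond dimension $3$ then come for free from the source instance. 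You would need to switch to a source problem that is both NP-hard and natively $3$-regular, rather than trying to regularize 3-SAT.
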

\begin{proof}
    It is easy to see that \tnitp~is in NP; indeed, suppose we have a YES instance $T$, i.e.~there exists an input $x\in [d]^m$ such that $T(x)\geq 1$. Since all tensors comprising $T$ consist of non-negative entries, it follows that $T(x)\neq 0$ if and only if there exists a labeling of the tensors' virtual edges yielding a positive number. Such a labeling can be verified in polynomial-time, yielding the claim.

    Note now that the proof of Theorem~\ref{thm:hard} immediately yields that \tnitp~is NP-hard. However, the degree of the graph in that construction can be large. To obtain the statement of our claim here, we instead observe a many-one reduction from the NP-complete problem Edge-Coloring (ECOL) to \tnitp. Specifically, recall that in ECOL, one is given a simple graph $G=(V,E)$ and a choice of $c\in\nats$ colors, and asked whether there exists a coloring of the edges so that no two edges of the same color are incident on the same vertex. For this problem, our starting point is the fact that determining whether a simple $3$-regular graph is edge-colorable with $3$ colors is NP-hard~\cite{H81}. Thus, suppose $G$ is a simple $3$-regular graph. We construct a tensor network $T$ from $G$ as follows. For each vertex $v\in V$, create a tensor node $T_v:[3]^{3}\mapsto \set{0,1}$. For each edge $(u,v)\in E$, connect the tensor nodes $T_u$ and $T_v$ with an edge. Finally, define each tensor $T_v$ such that $T_v(x_1, x_2,x_3)=1$ if $x_1\neq x_2$, $x_2\neq x_3$, and $x_1\neq x_3$, and $T_v(x_1,x_2,x_3)=0$ otherwise. Note that this is a closed network which is $3$-regular, has bond dimension $3$ on all edges, and all tensor entries are non-negative.

    To finally see correctness, observe simply that each tensor $T_v$ acts as a ``local check'', such that $T_v$ outputs $1$ if and only if all its adjacent edges are given distinct values or \emph{colors}. Hence, the network evaluates to a non-zero value if and only if there exists a valid $3$-edge-coloring of $G$, i.e.~we have reduced the problem to an instance of \tnitp. As the reduction clearly runs in polynomial time, this completes the proof.
\end{proof}
 Theorem~\ref{thm:tnit+} shows that $3$-regular non-negative networks suffice to achieve NP-hardness for \tnit. In contrast, it is well known that tensor networks on $2$-regular graphs \emph{can} be efficiently contracted (even in the presence of arbitrary complex entries). This is because such graphs are a union of cycles and paths, and the latter two can be contracted similar to how Matrix Product States are contracted.
Finally, note that the proof of Theorem~\ref{thm:tnit+} also yields the following simple result.
\begin{observation}
    Contracting a non-negative, $3$-regular, planar tensor network with bond dimension $3$ is $\class{\#P}$-hard.
\end{observation}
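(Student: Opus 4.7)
The plan is to observe that the tensor network $T$ constructed in the proof of Theorem~\ref{thm:tnit+} has contraction value precisely equal to the number of proper $3$-edge-colorings of the input $3$-regular graph $G$. Indeed, each node $T_v$ outputs $1$ exactly when the three incident edges are labeled with three distinct values in $[3]$, and contributes $0$ otherwise. Summing over all labelings of the virtual edges therefore counts exactly the assignments of colors in $[3]$ to the edges of $G$ such that at every vertex, the three incident edges receive distinct colors, i.e.\ proper $3$-edge-colorings.

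Next I would note that this construction is structure-preserving at the graph level: the underlying graph of $T$ is isomorphic to $G$ (one tensor per vertex, one edge per edge). Hence if we restrict the input to \emph{planar} cubic graphs $G$, the resulting $T$ is a planar, $3$-regular, non-negative tensor network with bond dimension $3$. Thus the contraction problem for this restricted class of tensor networks is at least as hard as computing the number of proper $3$-edge-colorings of a planar cubic graph.

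The main obstacle is establishing the $\class{\#P}$-hardness of counting proper $3$-edge-colorings of planar cubic graphs. I would cite the known dichotomy/Holant-based hardness result for this problem (essentially, counting nowhere-zero $\mathbb{Z}_2\times\mathbb{Z}_2$-flows on planar cubic graphs is $\class{\#P}$-hard, which by a standard correspondence for cubic graphs is equivalent up to a multiplicative constant to counting proper $3$-edge-colorings). Chaining this with the above construction, computing the contraction value of $T$ would allow one to solve a $\class{\#P}$-hard counting problem, completing the proof. If a direct citation is deemed less clean, an alternative plan would be to supply a short explicit reduction: starting from counting proper $3$-vertex-colorings of planar graphs (known $\class{\#P}$-hard via Vertigan's results on evaluating the Tutte polynomial of planar graphs at $(-2,0)$), replace each vertex of degree $\Delta$ by a planar cubic ``equality gadget'' with three external legs, yielding a planar cubic instance whose $3$-edge-coloring count is an efficiently invertible function of the original $3$-vertex-coloring count; this preserves planarity and the cubic property, and the final tensor network remains non-negative with bond dimension $3$.
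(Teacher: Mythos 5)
Your proposal matches the paper's proof. The paper makes the identical observation that the contraction of the network from Theorem~\ref{thm:tnit+} counts the proper $3$-edge-colorings of $G$ and that planarity carries over, and then cites Cai, Guo, and Williams~\cite{CGW14}, who prove via a Holant dichotomy that counting $3$-edge-colorings of planar cubic graphs with $3$ colors is $\class{\#P}$-hard --- exactly your primary route.

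One warning about your fallback route: a planar cubic ``equality gadget'' with three external legs cannot exist. In any cubic gadget with three dangling legs, the number of internal vertices $k$ satisfies $3k = 2\abs{E_{\mathrm{int}}} + 3$, so $k$ is odd; in a proper $3$-edge-coloring each color class is a matching covering every internal vertex exactly once, so for each fixed color, $(\text{number of legs of that color}) + 2\cdot(\text{number of internal edges of that color}) = k$, which is odd, forcing exactly one leg of each color. Thus the three legs are always pairwise distinct, never all equal, and the gadget-based alternative would need a fundamentally different design. Since your primary route is the one the paper uses, the proof stands.
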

\begin{proof}
    This follows simply because the contraction of the network constructed in the proof of Theorem~\ref{thm:tnit+} yields the number of valid edge-colorings of $G$. The latter problem is $\class{\#P}$-hard for $3$-regular planar graphs and $3$ colors~\cite{CGW14}.
\end{proof}

\noindent\paragraph{Injective tensor networks.} We now consider so-called \emph{injective} tensor networks, which were studied for example in the translationally invariant case in~\cite{PSGWC10}. To define such networks, we first require some terminology: Given a tensor network $T$ on vertex set $V$, let $S\subseteq V$. Then, the subnetwork of $T$ \emph{induced} by $S$ is the network consisting of all vertices in $S$, as well as all edges (physical and virtual) incident on vertices in $S$. An example is given by Figure~\ref{fig:disp3}.

\begin{figure}[t]\centering
  \includegraphics[height=3.5cm]{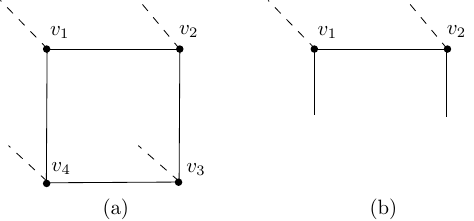}
  \caption{(a) A tensor network $T$. (b) The subnetwork of $T$ induced by vertices $\set{v_1,v_2}$.}\label{fig:disp3}
\end{figure}

\begin{defn}[Injective tensor network]\label{def:injective}
    Let $T:[d]^{n}\mapsto \complex$ be a tensor network. We call $T$ $k$-injective for $1\leq k \leq n$ if $T$ can be partitioned into $k$ sets of nodes $S=\set{S_{i}}_{i=1}^k$, such that for all $i$, the subnetwork $T_i$ of $T$ induced by $S_i$ has the following properties:

     \begin{enumerate}
        \item $T_i$ is connected.
        \item At least one node in $T_i$ has a physical edge.
        \item Let $L_i$ denote the linear map from the virtual edges crossing the cut $S_i$ versus $V\backslash S_i$ in $T$ (where $V$ is the vertex set of $T$) to the physical edges of $S_i$. Then, $L_i$ is an injective map.
     \end{enumerate}
\end{defn}

By exploiting the injective property of such networks, we can show the following.

\begin{theorem}\label{thm:injnonzero}
     If a tensor network $T$ is $k$-injective for some $k$, then $T$ is non-zero.
\end{theorem}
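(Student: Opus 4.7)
\medskip

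\noindent\textbf{Proof plan for Theorem~\ref{thm:injnonzero}.} The plan is to view the network $T$ as the output of applying the linear maps $L_i$ supplied by the injectivity hypothesis to an explicit nonzero ``input state'' built out of the edges crossing the cuts between the $S_i$'s, and then to use the fact that a tensor product of injective linear maps is injective.

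\medskip

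\noindent\textbf{Step 1 (Recast each $T_i$ as an injective linear map).} For each part $S_i$ in the partition, let $E_i^{\mathrm{ext}}$ denote the virtual edges of $T$ that cross the cut $(S_i, V\setminus S_i)$, and let $P_i$ denote the physical edges incident to $S_i$. Write $\mathcal{V}_i=\bigotimes_{e\in E_i^{\mathrm{ext}}}\mathbb{C}^{d_e}$ and $\mathcal{P}_i=\bigotimes_{p\in P_i}\mathbb{C}^{d_p}$. After contracting all virtual edges internal to $S_i$, the induced subnetwork $T_i$ defines exactly the linear map $L_i:\mathcal{V}_i\to\mathcal{P}_i$ from Definition~\ref{def:injective}, and by hypothesis $L_i$ is injective. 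In particular each $\mathcal{P}_i$ is nontrivial (since $T_i$ has at least one physical edge).

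\medskip

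\noindent\textbf{Step 2 (Express $T$ as $(\bigotimes_i L_i)$ applied to a ``glue'' state).} For each cut edge $e$ of bond dimension $d_e$ joining some $S_i$ to some $S_j$, let $\ket{\Phi_e}=\sum_{x=1}^{d_e}\ket{x}_i\ket{x}_j\in\mathbb{C}^{d_e}\otimes\mathbb{C}^{d_e}$, viewed as sitting inside $\mathcal{V}_i\otimes\mathcal{V}_j$ in the natural slot. Set
\[
\ket{\Phi}=\bigotimes_{e\text{ cross-cut}}\ket{\Phi_e}\ \in\ \bigotimes_i \mathcal{V}_i.
\]
By unpacking the definition of tensor-network contraction, the vector $\ket{v_T}\in\bigotimes_i\mathcal{P}_i$ represented by $T$ is
\[
\ket{v_T}=\Bigl(\bigotimes_{i=1}^{k} L_i\Bigr)\ket{\Phi},
\]
since summing over the shared index on each cross-cut edge $e$ corresponds precisely to contracting the two copies of $\ket{x}$ in $\ket{\Phi_e}$ against the corresponding legs of $L_i$ and $L_j$.

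\medskip

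\noindent\textbf{Step 3 (Conclude nonzero-ness).} Each $\ket{\Phi_e}$ is a nonzero vector (it is a sum of linearly independent product basis states), and a tensor product of nonzero vectors is nonzero, so $\ket{\Phi}\ne 0$. Moreover a tensor product of injective linear maps is injective: if each $L_i$ has a left inverse $L_i^+$ (which exists over $\mathbb{C}$ precisely because $L_i$ is injective), then $\bigotimes_i L_i^+$ is a left inverse of $\bigotimes_i L_i$. Applying this injective map to the nonzero vector $\ket{\Phi}$ yields $\ket{v_T}\ne 0$, which is exactly the statement that $T$ is non-zero.

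\medskip

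\noindent\textbf{Anticipated obstacles.} The mathematical content is light; the main care required is bookkeeping. One needs to verify that the identification in Step~2 really is correct: each cross-cut edge must appear exactly once in $\ket{\Phi}$, its two copies of $\ket{x}$ must be routed to the appropriate legs of the two incident maps $L_i,L_j$, and the internal edges of $S_i$ (which have already been folded into $L_i$) must not be double-counted. Checking this carefully, using the two viewpoints of tensor networks (vector versus linear-map) from Section~\ref{scn:def}, is the only nontrivial step. Once Step~2 is in place, Step~3 is immediate from standard linear algebra.
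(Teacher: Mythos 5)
Your proof is correct, and it takes a genuinely different (though dual) route from the paper's. The paper argues by exhibiting a \emph{witness covector}: since each $L_i$ is injective, the adjoint $L_i^*$ (mapping physical legs to virtual legs) is surjective, so one can choose physical inputs $\ket{\psi_i}$ with $L_i^*\ket{\psi_i}=\ket{0}^{\otimes n_i}$, whence contracting $\bigotimes_i\bra{\psi_i}$ against $\ket{v_T}$ reduces to inner products $\braket{0}{0}=1$ across every cross-cut edge and so evaluates to $1\neq 0$. You instead argue on the \emph{primal} side: you make explicit the decomposition $\ket{v_T}=(\bigotimes_i L_i)\ket{\Phi}$ with $\ket{\Phi}$ a tensor product of unnormalized maximally entangled ``glue'' states on the cross-cut edges, note $\ket{\Phi}\neq 0$, and use that a tensor product of injective maps is injective (via tensoring left inverses). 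Both arguments hinge on the same underlying structure, but yours has two advantages worth noting: it names the Bell-pair gluing explicitly, which is the standard way practitioners think about tensor networks as maps applied to entangled resource states, and it avoids the slightly informal step in the paper's proof (``the contraction of the network along each edge involves only inner products of the form $\braket{0}{0}$'') by replacing it with a clean linear-algebraic fact. The paper's version in turn has the minor advantage of producing an explicit physical input witnessing nonzero-ness, which is what is re-used in Corollary~\ref{cor:inj}. Your Step~2 bookkeeping checks out, including the corner cases where a block $S_i$ has no cross-cut edges (then $\mathcal{V}_i\cong\complex$ and $L_i$ injective simply means $T_i$ contracts to a nonzero vector).
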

\begin{proof}
    Let $T:[d]^{n}\mapsto \complex$ be a $k$-injective tensor network, and let $S=\set{S_{i}}_{i=1}^k$ be a partition of the nodes of $T$ as in Definition~\ref{def:injective} with corresponding linear maps $L_i:(\complex^d)^{\otimes {n_i}}\mapsto(\complex^d)^{\otimes m_i}$. Now, since any $L_i$ is injective, it follows that the adjoint map $L_i^*$ from the physical to virtual edges is surjective. Thus, for each $L_i^*$, there exists an input $\ket{\psi_i}\in (\complex^d)^{\otimes m_i}$ to its physical edges such that the output along the virtual edges is the $n_i$-qudit product state $\ket{0}^{\otimes n_i}$. In other words, there exists a physical input to the network such that the contraction of the network along each edge involves only inner products of the form $\braket{0}{0}=1$. Thus, $T$ is non-zero, as claimed.
\end{proof}

An immediate corollary to Theorem~\ref{thm:injnonzero} is the following, which intuitively says: If a given tensor network $T$ is injective, then a prover can prove to a verifier that $T$ is non-zero simply by demonstrating that $T$ is injective (i.e. the prover does not \emph{a priori} know that $T$ is injective).

\begin{cor}\label{cor:inj}
    \tnit~for a $k$-injective network in which each of the $k$ sets of nodes in the injective partition are of size $O(\log n)$ is in NP. Here, $n$ is the number of nodes in the network, and we assume $d\in\Theta(1)$.
\end{cor}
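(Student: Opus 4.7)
The strategy is direct: exhibit an NP verifier which takes as its witness the partition $S=\{S_i\}_{i=1}^k$ promised by the $k$-injectivity of $T$, and which simply verifies that $S$ meets the three conditions of Definition~\ref{def:injective}. Once this verification succeeds, Theorem~\ref{thm:injnonzero} immediately certifies $T$ is non-zero and the verifier outputs YES. Completeness and soundness are then automatic: a YES instance in the promised class must by hypothesis admit such a partition, while any correctly verified partition forces $T\neq 0$.

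Conditions (1) and (2) of Definition~\ref{def:injective}, namely connectivity of each induced subnetwork $T_i$ and the existence of at least one physical edge per $T_i$, are standard graph-theoretic checks that run in polynomial time directly from the description of $T$ together with the candidate partition $S$. The main step is condition (3): for each $i$, I would construct the matrix of the linear map $L_i$ explicitly, and then verify injectivity via a rank computation (Gaussian elimination), which is polynomial-time linear algebra.

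To construct $L_i$ in polynomial time, I would fix each basis element of its domain (i.e.\ a labelling of the virtual edges crossing the cut between $S_i$ and $V\setminus S_i$) and each basis element of its codomain (i.e.\ a labelling of the physical edges of $S_i$), and for each such pair enumerate all labellings of the internal virtual edges of $T_i$, summing the corresponding product of tensor entries to obtain that matrix entry. Since $|S_i|\in O(\log n)$, $d\in\Theta(1)$, and each node's tensor must have $d^{\deg(v)}$ entries written in polynomially many bits, the subnetwork $T_i$ has only logarithmically many edges; hence both the number of entries of $L_i$ and the cost to compute each entry remain polynomial in $n$.

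The main obstacle I anticipate is precisely this bookkeeping about the effective size of $T_i$. A bound of $O(\log n)$ on $|S_i|$ together with a bound of $O(\log n)$ on the degree of each node a priori yields only $O(\log^2 n)$ total incident edges on $S_i$, which would make $L_i$ of quasi-polynomial size. I would resolve this either by reading the hypothesis as covering the total edge count (the regime relevant in Section~\ref{scn:apps}, where the networks arising from commuting Hamiltonians have constant-degree nodes), or by invoking a bounded-treewidth contraction ordering on $T_i$ to argue that $L_i$'s entries can be computed in genuinely polynomial time even under the weaker edge bound.
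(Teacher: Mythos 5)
Your approach is the same as the paper's: the prover supplies the partition $S=\set{S_i}$, the verifier checks that each induced $L_i$ is injective, and Theorem~\ref{thm:injnonzero} then certifies $T\neq 0$. The paper's own proof is a one-liner asserting that a subnetwork of $O(\log n)$ nodes can be contracted in polynomial time (when $d\in\Theta(1)$), from which the injectivity check is declared polynomial --- exactly the step you probe.

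The worry you raise in your final paragraph is a genuine subtlety that the paper glosses over, and you should trust it over your earlier sentence ``hence the subnetwork $T_i$ has only logarithmically many edges,'' which does not follow. With $d\in\Theta(1)$, a node of degree $\delta$ requires $d^\delta$ entries in the input, so node degrees as large as $\Theta(\log n)$ are consistent with a polynomial-size description of $T$. A block $S_i$ with $|S_i|\in O(\log n)$ may therefore have $\Theta(\log^2 n)$ incident edges (physical plus cut-crossing), in which case the matrix of $L_i$ has $d^{\Theta(\log^2 n)}$ entries --- quasi-polynomially many. Your first proposed remedy is the right one and matches the paper's evident intent: read the ``size $O(\log n)$'' hypothesis as bounding the number of edges incident to each $S_i$ (equivalently, restrict to bounded-degree nodes, or interpret ``size'' as the bit-length of the description of $T_i$), which makes $L_i$ of polynomial order. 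Your second remedy via a bounded-treewidth contraction ordering does not close the gap: the obstruction is the dimension of $L_i$ itself, not the cost of producing a single entry, and certifying injectivity requires examining an object whose order is the domain dimension of $L_i$, no matter how cleverly one schedules the internal contractions. Modulo that strengthened reading of the hypothesis, the rest of your verifier --- witnessing the partition, checking connectivity, the physical-edge condition, and injectivity by a rank computation --- is exactly the paper's argument.
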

\begin{proof}
    The prover here specifies the sets $S$ in Definition~\ref{def:injective}. The claim then follows by Theorem~\ref{thm:injnonzero} and the fact that a network of size $O(\log n)$ takes time $d^{O(n)}=\poly(n)$ to contract, thus allowing us to check whether each map $L_i$ specified by the prover is injective in polynomial time.
\end{proof}

Corollary~\ref{cor:inj} gives us an efficiently verifiable condition which can certify that a non-zero vector represented by tensor network $T$ is indeed non-zero. It is thus natural to ask whether a suitably defined converse of this statement might hold. For example, given a non-zero vector $\ket{\psi}$, does there always exist \emph{some} $k$-injective representation of $T$ in which the size of the sets $S_i$ are logarithmic? This question is interesting for two reasons. First, injective tensor networks are generic (see, e.g., \cite{PSGWC10}). Second, using the techniques in Section~\ref{scn:apps}, a positive answer to this question might be a step towards resolving the long-standing open question of whether the commuting $k$-local Hamiltonian for arbitrary $k\in\Theta(1)$ is in NP in the affirmative.

To make progress on this question, we define the notion of geometrically equivalent tensor networks. Specifically, we say that networks $T$ and $T'$ are \emph{geometrically equivalent} if the parameters of their underlying graphs (e.g.~number of nodes, placement of physical and virtual edges, physical dimension, bond dimension, etc\ldots) are identical. In other words $T$ and $T'$ differ only in the specifications of the tensors (i.e.~nodes) themselves. Note that the notion of geometric equivalence is arguably well-motivated, as often in Hamiltonian complexity, given a local Hamiltonian $H$ with interaction graph $G$, one fixes the geometry of the tensor network ansatz intended to represent the ground state of $H$ to match $G$.

With this definition in hand, we now show the following.

\begin{theorem}\label{thm:noinj}
    For all $k>2$, there exists a non-zero network $T$ which does not have a geometrically equivalent $k$-injective representation.
\end{theorem}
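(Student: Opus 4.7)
The plan is to exhibit, for each fixed $k>2$, a simple non-zero network whose underlying \emph{geometry} alone precludes any $k$-injective tensor assignment. The natural candidate is the $k$-cycle: take vertices $v_1,\ldots,v_k$ arranged in a cycle, give each vertex one physical edge of dimension $d=2$, and make every (virtual) cycle edge have bond dimension $D=2$. Equip each node with the ``delta'' tensor whose only nonzero entry is $T_{v_i}(0,0,0)=1$; the resulting $T$ represents the product state $\ket{0}^{\otimes k}$, which is nonzero, so this is a legitimate instance of the theorem.

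The core observation is that geometric equivalence preserves the underlying graph together with all physical and bond dimensions, so any geometrically equivalent $T'$ remains a $k$-cycle with $d=D=2$. Suppose, for contradiction, that $T'$ admitted a $k$-injective partition $\set{S_1,\ldots,S_k}$. Since this is a partition of a $k$-element vertex set into $k$ nonempty sets, we must have $\abs{S_i}=1$ for every $i$. For such singletons, the connectedness requirement and the requirement that each $S_i$ contain a physical edge (conditions (1) and (2) of Definition~\ref{def:injective}) are automatic in our construction.

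Condition (3) now delivers the contradiction. For a singleton $S_i=\set{v_j}$ with $k\geq 3$, the virtual edges crossing the cut $S_i$ vs.\ $V\setminus S_i$ are exactly the two distinct cycle edges at $v_j$, so $L_i$ has a domain of dimension $D^2=4$, while its codomain (the physical space of $S_i$) has dimension $d=2$. Therefore $L_i:\complex^4\to\complex^2$ cannot be injective no matter what tensors appear in $T'$, ruling out any $k$-injective geometric equivalent of $T$.

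The only place one needs to be careful is in confirming that Definition~\ref{def:injective} genuinely demands exactly $k$ nonempty parts (so that a $k$-vertex network is forced into singleton blocks), and in recognizing that the ``virtual space crossing the cut'' is the tensor product of all such edges' bond spaces rather than a single edge; past these conventions, the argument is a one-line dimension count.
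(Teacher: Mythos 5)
Your proof is correct, but it takes a genuinely different --- and considerably more degenerate --- route than the paper's. You pick a $k$-cycle whose geometry alone forbids $k$-injectivity: a $k$-vertex graph partitioned into $k$ nonempty parts (and empty parts are disallowed, since condition (2) of Definition~\ref{def:injective} requires a node with a physical edge) must have only singletons, and a singleton in a cycle with $k\geq 3$ has two bond-dimension-$2$ virtual edges crossing the cut (domain dimension $4$) but only one physical edge of dimension $2$ (codomain dimension $2$), so the map $L_i$ can never be injective regardless of the tensors. This is a one-line dimension count, and it does prove the theorem as stated.

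The paper's argument is more informative. It takes an $n$-node MPS (with $n$ decoupled from $k$) representing $\ket{\psi}=\ket{0}\ket{0}^{\otimes n-2}\ket{0}+\ket{1}\ket{0}^{\otimes n-2}\ket{1}$, a state with a long-range correlation between the endpoints of the chain. When $n$ is large enough, a contiguous interior block of $s\geq2$ sites has cut dimension $4$ and physical dimension $2^s\geq4$, so there is no dimension obstruction and the geometry genuinely does admit $k$-injective tensor assignments. The contradiction instead comes from the state: choosing a block that avoids both endpoints and feeding $0$ to the left physical endpoint and $1$ to the right, injectivity of that block would let one match the boundary vectors emitted by the outer portions, producing a non-zero amplitude where $\ket{\psi}$ requires amplitude zero. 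This is what drives Observation~\ref{obs:longrange} (injective networks cannot exactly capture long-range correlations) --- a consequence your product-state example cannot reach, since its failure is purely graph-theoretic. In short, you exhibit a graph that cannot carry \emph{any} $k$-injective network at all, while the paper exhibits a \emph{state} that cannot be $k$-injectively represented on a geometry that would otherwise support injectivity; the latter is the physically meaningful statement.
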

\begin{proof}
    We proceed by constructing a non-zero matrix product state (i.e.~1D tensor network) which satisfies the claim. To begin, consider the $n$-qubit state
    \[
        \ket{\psi}=\ket{0}\ket{0}^{\otimes n-2}\ket{0}+\ket{1}\ket{0}^{\otimes n-2}\ket{1},
    \]
    which can be represented by an MPS of bond dimension $2$ as follows. There are $n$ nodes in the network, which we label as $\set{v_i}_{i=1}^n$, where node $v_i$ corresponds to qubit $i$ of $\ket{\psi}$. Each node has a physical edge. Vertex $v_i$ is connected via a virtual edge to vertex $v_{i-1}$ if $i\geq2$ and to $v_{i+1}$ if $i\leq n-1$. The nodes $v_1$ and $v_n$ output $1$ if all their edges (i.e.~both physical and virtual) are labeled by $0$, or if all edges are labeled by $1$; otherwise, they output $0$. As for $v_{2}$ through $v_{n-1}$, these output $1$ if their physical edge is set to $0$ and either both virtual edges are $0$ or both are $1$; otherwise, they output $0$. Thus, the only edge labelings which produce a non-zero value are those with all edges labeled $0$, or when the physical edges are labeled $10^{n-2}1$ and the virtual edges are all labeled $1$. In both these cases, the network outputs $1$. Thus, the MPS represents $\ket{\psi}$, as claimed.

\begin{figure}[t]\centering
  \includegraphics[height=3.3cm]{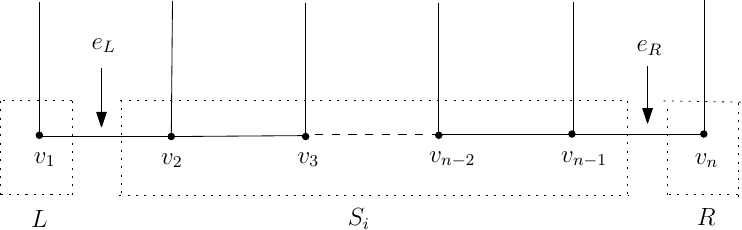}
  \caption{The network $T'$ in the proof of Theorem~\ref{thm:noinj}. In this example, $L=\set{v_1}$, $S_i=\set{v_2,\ldots,v_n}$, and $R=\set{v_n}$.}\label{fig:disp4}
\end{figure}

    Assume now, for sake of contradiction, that $\ket{\psi}$ admits a geometrically equivalent $k$-injective representation $T'$ for $k>2$. Since $k>2$, there exists a block $S_i$ such that $v_1,v_n\not\in S_i$. See Figure~\ref{fig:disp4} for an illustration. By definition of injective, we know that  $S_i$ is a contiguous set of nodes $\set{v_j,v_{j+1},\ldots,v_{m-1},v_m}$. Let $L=\set{v_1,\ldots,v_{j-1}}$ and $R=\set{v_{m+1},\ldots,v_n}$. We denote the virtual edges connecting $L$ and $R$ to $S_i$ as $e_L$ and $e_R$, respectively. Now, by definition of $\ket{\psi}$, if we input $0$ and $1$ on physical edges $1$ and $n$, respectively, $T'$ outputs $0$. Then, suppose the nodes in $L$ all receive physical input $0$, and the nodes in $R$ all receive physical input $0$, with the exception of $v_n$ which receives $1$. Let $\ket{\psi_L}$ and $\ket{\psi_R}$ denote the vectors output by $L$ and $R$ on the edges $e_L$ and $e_R$. Since the map corresponding to $S_i$ is injective, there exists a physical input to the nodes in $S_i$ such that $S_i$ outputs $\ket{\psi_L}$ on $e_L$ and $\ket{\psi_R}$ on $e_R$. But this implies $T'$ is non-zero on this input, which is a contradiction. This yields the claim.
\end{proof}

Note that the key idea behind the proof of Theorem~\ref{thm:noinj} is that given a fixed geometry for a tensor network, if one wishes to represent a quantum state with long-range correlations such as a Bell pair between the first and last qubits of a chain of tensors, injectivity can ``interfere'' with the ability of the first and last particles in the chain to correlate with one another. In other words, this proof technique leads to the following observation.

\begin{observation}\label{obs:longrange}
    Injective tensor networks cannot in general (exactly) represent a quantum state with long-range correlations (e.g.~such as a Bell pair between the first and last qubits of a chain of tensors).
\end{observation}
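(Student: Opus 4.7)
The plan is to directly leverage the argument from the proof of Theorem~\ref{thm:noinj}, observing that nothing in that argument depends on the specific state $\ket{\psi}=\ket{0}\ket{0}^{\otimes n-2}\ket{0}+\ket{1}\ket{0}^{\otimes n-2}\ket{1}$ beyond the fact that it has a vanishing amplitude on some computational basis state whose bits on the first and last qubits are ``incompatible'' (here, $10^{n-2}1$ has non-zero amplitude while e.g.\ $00^{n-2}1$ has zero amplitude). First I would formalize what I mean by \emph{long-range correlation}: a state $\ket{\psi}$ on qudits $1,\ldots,n$ has a long-range correlation between qudits $1$ and $n$ if there exist physical assignments $a,b$ to qudits $2,\ldots,n-1$ and values $x,x',y,y'\in[d]$ such that $\braket{x,a,y}{\psi}\neq 0$ and $\braket{x',a,y}{\psi}\neq 0$ but $\braket{x,a,y'}{\psi}=0$. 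A Bell pair between qudits $1$ and $n$ (tensored with a fixed middle state) is the prototypical example.

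Next I would assume, for contradiction, that such a $\ket{\psi}$ admits an injective tensor network representation $T$ with partition $\set{S_i}$ of its node set, where each $S_i$ contains a physical edge and the associated linear map $L_i$ is injective. By choosing $S_i$ to be any block whose associated region lies strictly between qudit $1$ and qudit $n$ (which exists for a sufficiently long chain with blocks strictly smaller than the whole), I would split the remaining nodes into a ``left'' part $L$ containing qudit $1$ and a ``right'' part $R$ containing qudit $n$, connected to $S_i$ through virtual edges $e_L$ and $e_R$ respectively, exactly as in Figure~\ref{fig:disp4}.

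Then I would apply the key injectivity trick. Feed $L$ and $R$ the physical inputs consistent with a non-zero amplitude configuration (say $x$ on qubit $1$, $y$ on qubit $n$, and $a$ elsewhere outside $S_i$), obtaining boundary vectors $\ket{\psi_L}$ on $e_L$ and $\ket{\psi_R}$ on $e_R$. Now change only the endpoint inputs to $x$ and $y'$; the vectors produced on $e_L,e_R$ by $L$ and $R$ are identical to before, because those parts received the same physical inputs. By injectivity of $L_i$, we can choose physical inputs for $S_i$ that reproduce $\ket{\psi_L}\otimes\ket{\psi_R}$ on the cut edges, making the full contraction non-zero. But this contradicts $\braket{x,a,y'}{\psi}=0$.

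The only subtlety — and the ``hardest'' step — is ensuring that a valid choice of $S_i$ can always be made so that both $L$ and $R$ are non-empty and contain qudits $1$ and $n$ respectively; this requires the partition blocks to be strictly smaller than the whole network, which is the typical regime of interest (e.g.\ $O(\log n)$ blocks as in Corollary~\ref{cor:inj}). If the entire network were a single injective block, the claim is vacuous, so the observation should be stated for injective networks whose partition is non-trivial with respect to the two correlated qudits. Modulo this mild qualification, the argument is essentially the same injectivity-plus-contradiction scheme used in Theorem~\ref{thm:noinj}.
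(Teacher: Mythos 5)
Your overall plan — abstract the argument from Theorem~\ref{thm:noinj} to a general definition of ``long-range correlation'' and rerun the injectivity contradiction — is exactly what the paper intends, since the paper simply remarks that Observation~\ref{obs:longrange} follows from the Theorem~\ref{thm:noinj} argument. However, your execution contains a genuine error in the key step. You set physical inputs $x$ on qudit $1$, $y$ on qudit $n$, and $a$ on the rest of $L\cup R$, compute boundary vectors $\ket{\psi_L},\ket{\psi_R}$, then ``change only the endpoint inputs to $x$ and $y'$'' and assert that ``the vectors produced on $e_L,e_R$ by $L$ and $R$ are identical to before, because those parts received the same physical inputs.'' This is false: $v_n\in R$, so changing qudit $n$'s input from $y$ to $y'$ does change $R$'s physical inputs, and hence changes $\ket{\psi_R}$. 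There is no ``freeze the boundary vectors, then swap an endpoint input'' step available — the boundary vectors are deterministic functions of the physical inputs on $L$ and $R$.

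This error propagates into two further gaps. First, with your definition, you have no guarantee that the \emph{new} boundary vector $\ket{\psi_R^{(a_R,y')}}$ is nonzero, which you need before injectivity of $L_i$ can conclude anything; the nonzeroness of $\ket{\psi_R^{(a_R,y)}}$ tells you nothing about the $y'$ case. In the Bell-pair example this matters: $\braket{x,a,y}{\psi}\neq 0$ and $\braket{x',a,y}{\psi}\neq 0$ force $x=x'=y$, so your second condition collapses to a repeat of the first and never touches $y'$. The correct definition (mirroring the paper's worked example with $\ket{0}\ket{0}^{\otimes n-2}\ket{0}+\ket{1}\ket{0}^{\otimes n-2}\ket{1}$) needs a condition of the form $\braket{x',a,y'}{\psi}\neq 0$ — i.e.\ some nonzero amplitude whose last coordinate is $y'$ with the same $a_L,a_R$ outside $S_i$ — so that $\ket{\psi_R^{(a_R,y')}}\neq 0$. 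Second, your ``zero'' condition $\braket{x,a,y'}{\psi}=0$ fixes the middle coordinates to $a_{S_i}$, but injectivity will produce some basis input $z$ for $S_i$ (not necessarily $a_{S_i}$) with $\langle z|L_i|\psi_L,\psi_R^{(a_R,y')}\rangle\neq 0$. To reach a contradiction you must require $\braket{x,a_L,z,a_R,y'}{\psi}=0$ for \emph{all} $z$, i.e.\ the marginal amplitude with endpoints $(x,y')$ and outer assignment $(a_L,a_R)$ vanishes identically over the $S_i$-block. With those two repairs — the extra nonvanishing condition and the universally-quantified zero condition — the contradiction argument goes through exactly as in Theorem~\ref{thm:noinj}.
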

\noindent We remark that the condition of geometric equivalence plays an important role in this statement, as otherwise the notion of ``long-range'' is ill-defined. (In other words, to define ``long-range'', we assume the underlying physical systems are arranged according to some fixed geometry which is respected by the tensor network describing them.)

\section{Connections to Hamiltonian complexity}\label{scn:apps}

We now discuss connections between \tnit~and the commuting $k$-local Hamiltonian problem ($k$-CLH). Recall that in $k$-CLH, one is given a set of $k$-local Hermitian operators $\set{H_i}$, which act on $n$ $D$-dimensional qudits and which pairwise commute, as well as real parameters $\alpha$ and $\beta$ such that ${\beta-\alpha}\geq 1/\poly(n)$. We are asked to decide whether the smallest eigenvalue of $H=\sum_i H_i$ is at most $\alpha $ or at least $\beta$.

We first observe a connection between \tnit~and $k$-CLH. Specifically, we note that ground states of YES instances of $k$-CLH have a succinct tensor network description. Using this description, we then deduce that the ability to solve certain cases of \tnit~in NP would place $k$-CLH in NP for arbitrary $D\in O(1)$ and $k\in O(\log n)$. More generally, we have the following.

\begin{lemma}\label{l:reduction}
     For any $k\in O(\log n)$ and $D\in O(1)$, there exists a non-deterministic polynomial time mapping reduction from $k$-CLH on $D$-dimensional qudits to $\tnit$.
\end{lemma}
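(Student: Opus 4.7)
The plan is to use the non-determinism of the reduction to guess a common eigenvalue decomposition of the commuting family $\{H_i\}$, and then to build, in deterministic polynomial time, a tensor network whose value is non-zero precisely when the guessed joint eigenspace witnesses an eigenvector of $H=\sum_i H_i$ with energy at most $\alpha$.

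Because the $H_i$ pairwise commute, they share a common eigenbasis, and every eigenvalue of $H$ is a sum $\sum_i \lambda_i^{(j_i)}$ where $\lambda_i^{(j_i)}$ is an eigenvalue of $H_i$. Each $H_i$ acts on a $D^k=D^{O(\log n)}=\poly(n)$-dimensional space, so its eigenvalues and the associated spectral projectors $P_i^{(j_i)}$ can be computed to exponential precision in deterministic polynomial time. The reduction first non-deterministically guesses one $\lambda_i^{(j_i)}$ per $i$; if $\sum_i \lambda_i^{(j_i)}>\alpha$ it outputs a trivially-NO network, and otherwise it proceeds to compute all of the $P_i^{(j_i)}$.

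Given such a guess, the reduction assembles a ``quantum circuit'' tensor network $T$ for the product $\Pi=\prod_i P_i^{(j_i)}$: for each of the $n$ qudits we reserve one input and one output physical edge of dimension $D$, each projector $P_i^{(j_i)}$ becomes a $2k$-leg node on its $k$ qudits, and consecutive projectors touching the same qudit are joined by a virtual edge of bond dimension $D$ (with an identity node inserted on any qudit no projector touches). Because the $P_i^{(j_i)}$ commute, $\Pi$ is itself a projector, namely onto the joint eigenspace of eigenvalue $\sum_i\lambda_i^{(j_i)}$; and by the definition of tensor contraction $T(a,b)=\bra{a}\Pi\ket{b}$, so viewing $T:[D]^{2n}\to\complex$ as a vector yields the vectorization $\ket{\Pi}$ of $\Pi$. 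Hence $T$ is a non-zero tensor network iff the guessed joint eigenspace is non-empty. For a YES instance the prover guesses the eigenvalues of an actual low-energy eigenstate and $\Pi\neq 0$. For a NO instance every eigenvalue of $H$ is $\geq\beta>\alpha$, so any accepted guess corresponds to a would-be eigenvalue strictly below $\beta$, which is forbidden by the NO promise and hence to an empty joint eigenspace, forcing $\Pi=0$ and $T\equiv 0$. Finally, since a non-zero projector satisfies $\|\Pi\|_F^2\geq 1$, some entry of $\ket{\Pi}$ has magnitude at least $D^{-n}$, so absorbing the scalar $D^n$ into a single node of $T$ rescales the YES case to meet the \tnit~convention $\alpha=1$, $\beta=0$ while preserving $T\equiv 0$ in the NO case.

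The main obstacle is numerical. The $H_i$ carry rational entries but generically have irrational spectra and eigenvectors, so each computed $P_i^{(j_i)}$ is only an approximation, and the error propagates through a product of $m=\poly(n)$ unit-norm projectors. One must invoke standard perturbation-theoretic bounds for spectral projectors and for products of bounded-norm matrices to show that $\poly(n)$ bits of working precision suffice to distinguish $\Pi\neq 0$ from $\Pi\equiv 0$ at the scale set by the $1/\poly(n)$ CLH promise gap. A smaller technical point is to verify explicitly that the circuit-style layout indeed contracts to the matrix element $\bra{a}\Pi\ket{b}$; this is a direct consequence of the definition of tensor contraction but deserves a brief check.
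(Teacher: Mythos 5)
Your proof follows essentially the same route as the paper: non-deterministically guess, for each $H_i$, which eigenspace of $H_i$ the witness lives in, reject unless $\sum_i\lambda_i\le\alpha$, and hand the resulting product of commuting projectors to a \tnit~oracle as a small tensor network. The one real difference is the shape of the output network: the paper also guesses a computational basis string $x$ and forms an $n$-leg network for $D^n\,\Pi_H\ket{x}$, arguing that for some $x$ the amplitude $\bra{x}D^n\Pi_H\ket{x}=D^n\abs{\braket{x}{\psi}}^2\ge1$; you instead leave both sides open and use the $2n$-leg vectorization $\ket{\Pi}$, exploiting $\fnorm{\Pi}^2\ge 1$ to find an entry of size $\ge D^{-n}$ before rescaling by $D^n$. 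These are interchangeable; your version saves the guess of $x$, the paper's produces a slightly smaller network matching the vector picture used elsewhere in the paper. Your observation about precision is a legitimate point the paper glosses over: the paper's prover ``guesses $\Pi_i$'' as if exact eigenprojectors of a rational Hermitian matrix were themselves rational, which they generally are not; whether one computes approximate projectors as you do or guesses them as the paper does, a bound on how the approximation error propagates through the product (and an argument that a slightly-perturbed NO-instance network is still recognized as NO, given that \tnit~demands $\beta=0$ exactly) is needed to make either proof airtight. That caveat applies equally to both arguments and does not distinguish yours from the paper's.
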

\begin{proof}
%
    We use standard techniques and a setup similar to that of Schuch~\cite{S11}. Specifically, Let $(H=\sum_i H_i, \alpha, \beta)$ be an instance of $k$-CLH with ground state $\ket{\psi}$, and let $O$ be an oracle deciding \tnit. Since all $H_i$ pairwise commute, if we take a spectral decomposition $H_i=\sum_j\lambda_{ij}\Pi_{ij}$ of each $H_i$, it follows that for all $i$, there exists an eigenspace projector $\Pi_i:=\Pi_{ij}$ such that $\Pi_{ij}\ket{\psi}=\ket{\psi}$. Since all $\Pi_i$ also pairwise commute (as they all diagonalize in the same basis as $H$), it follows that the ground space of $H$ is given by projector $\Pi_H:=\prod_i\Pi_i$. In particular, either $\trace(\Pi_H)=0$ or $\trace(\Pi_H)\geq 1$, and the trace can be computed by contracting $i$th input qubit wire of $\Pi_H$ with the $i$th output qubit wire of $\Pi_H$; call this closed tensor network $T$. The reduction now proceeds as follows.
    \begin{enumerate}
        \item Non-deterministically guess projectors $\Pi_i$.
        \item Check that for each $i$, $\Pi_i$ encodes some eigenspace of $H_i$ with corresponding eigenvalue $\lambda_i$ such that $\sum_i\lambda_i \leq \alpha$. If not, reject.
        \item Feed $T$ into the oracle $O$ for \tnit~and return $O$'s answer.
    \end{enumerate}
	Correctness now follows easily, since for a NO instance of $k$-CLH, any projector $\Pi_H$ which is non-zero must satisfy the condition $\sum_i\lambda_i \geq \beta$; thus, either Step 2 or Step 3 will reject.
\end{proof}

Lemma~\ref{l:reduction} shows that if $\tnit\in\class{NP}$, then $k$-CLH is in NP for $k\in O(\log n)$. Unfortunately, we know from Theorem~\ref{thm:tnithard} that it is unlikely for \emph{arbitrary} instances of \tnit~to be solvable in NP. On the other hand, by exploiting the specific structure of the tensor network $T$ constructed in Lemma~\ref{l:reduction}, it may be possible to check whether $T$ is non-zero in NP. Here is a simple example for which this can be done --- the MA-complete Stoquastic $k$-SAT problem~\cite{BBT06}. In this problem, the input is a set of $k$-local orthogonal projection operators $\set{\Pi_i}$ whose entries in the standard basis are all non-negative, and the question is whether there exists a state $\ket{\psi}$ such that for all $i$, $\Pi_i\ket{\psi}=\ket{\psi}$.

\begin{cor}\label{cor:stoq}
    The variant of stoquastic quantum $k$-SAT in which all local projectors pairwise commute is in NP for any $k\in O(\log n)$ and $D\in O(1)$.
\end{cor}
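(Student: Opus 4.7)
The plan is to combine Lemma~\ref{l:reduction} with the non-negative case of Theorem~\ref{thm:tnit+}. The central observation is that when the commuting local Hamiltonian arises from a stoquastic $k$-SAT instance $\{\Pi_i\}$, the tensor network built by the reduction of Lemma~\ref{l:reduction} consists entirely of non-negative tensors, so we can fall back on the NP algorithm for \tnitp.

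In more detail, I would set up the reduction as follows. Since we want $\Pi_i\ket{\psi}=\ket{\psi}$ for every $i$ and all $\Pi_i$ pairwise commute, the relevant ``ground space'' projector is simply $\Pi_H:=\prod_i \Pi_i$; there is no need to non-deterministically guess an eigenspace of each $H_i$ as in the general case, since for stoquastic $k$-SAT the eigenvalue we care about is fixed at $1$. The NP verifier thus (i) non-deterministically guesses a computational basis state $\ket{x}\in [D]^n$, (ii) writes down the tensor network $T_x$ representing $D^n\Pi_H\ket{x}$ exactly as in Lemma~\ref{l:reduction} (start from the trivial network for $\ket{x}$ and attach a tensor node for each $\Pi_i$ along the appropriate legs), and (iii) invokes the NP verifier for \tnitp\ from Theorem~\ref{thm:tnit+} to certify that $T_x$ is non-zero. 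Since $k\in O(\log n)$ and $D\in O(1)$, each $\Pi_i$ fits into a single tensor node with $D^{O(\log n)}\in\poly(n)$ entries, so $T_x$ has polynomial size.

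Non-negativity of $T_x$ follows immediately: each projector $\Pi_i$ has all non-negative entries in the standard basis by stoquasticity, $\ket{x}$ is encoded by a trivial $\{0,1\}$-valued tensor, and the scalar $D^n$ is positive. Hence every node of $T_x$ is a non-negative tensor, and $T_x$ is a valid input to \tnitp. Correctness is then inherited from Lemma~\ref{l:reduction}: in the YES case some basis state $\ket{x}$ has non-trivial overlap with a state in the image of $\Pi_H$, and the argument in Lemma~\ref{l:reduction} shows $\abs{T_x(x)}\geq 1$; in the NO case $\Pi_H=0$ and so $T_x$ represents the zero vector for every choice of $x$. The full NP witness is the pair $(x, w)$, where $w$ is the \tnitp\ witness (an edge labeling producing a positive contraction), and both pieces are polynomial-size and polynomial-time checkable.

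There is no real technical obstacle here; the proof is essentially a specialization of Lemma~\ref{l:reduction}. The one subtlety worth explicitly flagging in the write-up is why we may dispense with the non-deterministic guessing of the eigenspace projectors that appears in the general commuting $k$-CLH reduction: in stoquastic $k$-SAT the projectors $\Pi_i$ and the target eigenvalue $1$ are fixed by the input, so only the basis state $\ket{x}$ needs to be guessed. Once this is noted, plugging Theorem~\ref{thm:tnit+} into step (iii) closes the argument.
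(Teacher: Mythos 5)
Your proposal is correct and is essentially a more detailed write-up of the paper's own argument: the paper simply notes that the tensor network $T$ from Lemma~\ref{l:reduction} has all non-negative entries when the input is a stoquastic $k$-SAT instance, and then invokes Theorem~\ref{thm:tnit+}. Your additional observation---that for stoquastic $k$-SAT one need not guess eigenspace projectors since the $\Pi_i$ are given directly and the target eigenvalue is fixed at $1$---is a useful clarification but does not change the route.
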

\begin{proof}
        By definition of Stoquastic $k$-SAT, the network $T$ constructed in Lemma~\ref{l:reduction} for such a Hamiltonian has all real non-negative entries; thus, the claim follows from Theorem~\ref{thm:tnit+}.
\end{proof}

\section{Conclusions}\label{scn:conclusion}
In this paper, we have studied tensor network non-zero testing (\tnit). We have shown that \tnit~for {arbitrary} tensor networks is highly unlikely to be in the Polynomial-Time Hierarchy. We next obtained (among other results) that special cases of \tnit, such as non-negative and injective networks, lie in NP. Via a simple application of the non-negative case, we obtained that the commuting stoquastic quantum k-SAT problem is in NP for $k\in O(\log n)$ and $D$-dimensional systems for $D\in O(1)$.

Three questions we leave open are as follows. First, can the specific structure of the tensor network obtained in Lemma~\ref{l:reduction} be exploited to place the commuting $k$-local Hamiltonian problem into NP for $k\in O(\log n)$? Second, can the commuting stoquastic $k$-local Hamiltonian problem also be placed into NP using our techniques? Note that unlike for the stoquastic quantum k-SAT problem, here the local interaction terms are not necessarily projectors. Third, we have studied injective tensor networks --- how about the case of $G$-injective networks~\cite{SCP10}? Can anything interesting be said about \tnit~from a complexity theoretic perspective in this case?

\section{Acknowledgements}

We thank Jacob Biamonte for pointing out Penrose's~\cite{P67} interpretation of non-zero tensor networks to us, as well as for bringing to our attention existing works on tensor network non-zero testing. SG acknowledges support from a Government of Canada NSERC Banting Postdoctoral Fellowship and the Simons Institute for the Theory of Computing at UC Berkeley. ZL is supported by ARO Grant W911NF-12-1-0541, NSF Grant CCF-0905626 and Templeton Foundation Grant 21674. SWS is supported by NSF Grant CCF-0905626 and ARO Grant W911NF-09-1-0440. GW is supported by NSF Grant CCR-0905626 and ARO Grant W911NF-09-1-0440.

\bibliographystyle{alpha}
\bibliography{Sevag_Gharibian_Central_Bibliography_Abbrv,Sevag_Gharibian_Central_Bibliography}

\newcommand{\etalchar}[1]{$^{#1}$}
\begin{thebibliography}{PGSGG{\etalchar{+}}10}

\bibitem[AE11]{AE11}
D.~Aharonov and L.~Eldar.
\newblock On the complexity of commuting local {H}amiltonians, and tight
  conditions for {T}opological {O}rder in such systems.
\newblock In {\em Proceedings of the 52nd IEEE Symposium on Foundations of
  Computer Science (FOCS 2011)}, pages 334--343, 2011.

\bibitem[AE13]{AE13_1}
D.~Aharonov and L.~Eldar.
\newblock Commuting local {H}amiltonians on expanders, locally testable quantum
  codes, and the {qPCP} conjecture.
\newblock Available at arXiv.org e-Print quant-ph/1301.3407, 2013.

\bibitem[AL10]{AL10}
I.~Arad and Z.~Landau.
\newblock Quantum computation and the evaluation of tensor networks.
\newblock {\em SIAM Journal on Computing}, 39(7):3089--3121, 2010.

\bibitem[BBT06]{BBT06}
S.~Bravyi, A.~Bessen, and B.~Terhal.
\newblock Merlin-{A}rthur games and stoquastic complexity.
\newblock Available at arXiv.org e-Print quant-ph/0611021v2, 2006.

\bibitem[BMT14]{BMT14}
J.~D. Biamonte, J.~Morton, and J.~W. Turner.
\newblock Tensor network contractions for {\#sat}.
\newblock Available at arXiv.org e-Print quant-ph/1405.7375, 2014.

\bibitem[BT09]{BT09}
S.~Bravyi and B.~Terhal.
\newblock Complexity of stoquastic frustration-free {H}amiltonians.
\newblock {\em SIAM Journal on Computing}, 39(4):1462, 2009.

\bibitem[BV05]{BV05}
S.~Bravyi and M.~Vyalyi.
\newblock Commutative version of the local {H}amiltonian problem and common
  eigenspace problem.
\newblock {\em Quantum Information \& Computation}, 5(3):187--215, 2005.

\bibitem[CGW14]{CGW14}
J.-Y. Cai, H.~Guo, and T.~Williams.
\newblock The complexity of counting edge colorings and a dichotomy for some
  higher domain holant problems.
\newblock Available at arXiv.org e-Print cs.CC/1404.4020v1, 2014.

\bibitem[CV09]{CV09}
I.~Cirac and F.~Verstraete.
\newblock Renormalization and tensor product states in spin chains and
  lattices.
\newblock {\em Journal of Physics A: Mathematical and Theoretical},
  42(50):504004, 2009.

\bibitem[GHLS14]{GHLS14}
S.~Gharibian, Y.~Huang, Z.~Landau, and S.~W. Shin.
\newblock Quantum {H}amiltonian complexity.
\newblock Available at arXiv.org e-Print quant-ph/1401.3916v1, 2014.

\bibitem[GJS74]{GJS74}
M.~R. Garey, D.~S. Johnson, and L.~Stockmeyer.
\newblock Some simplified {NP}-complete problems.
\newblock In {\em Proceedings of the sixth annual ACM symposium on Theory of
  computing}, pages 47--63, 1974.

\bibitem[Has07]{Ha07}
M.~Hastings.
\newblock An area law for one-dimensional quantum systems.
\newblock {\em Journal of Statistical Mechanics}, P08024(08), 2007.

\bibitem[Has12]{H12_2}
M.~B. Hastings.
\newblock Matrix product operators and central elements: {C}lassical
  description of a quantum state.
\newblock {\em Geometry \& Topology Monographs}, 18:115--160, 2012.

\bibitem[Hol81]{H81}
I.~Holyer.
\newblock The {NP}-completeness of edge-coloring.
\newblock {\em SIAM Journal on Computing}, 10(4):718--720, 1981.

\bibitem[JBCJ13]{JBCJ13}
T.~H. Johnson, J.~D. Biamonte, S.~R. Clark, and D.~Jaksch.
\newblock Solving search problems by strongly simulating quantum circuits.
\newblock {\em Scientific Reports}, 3:1235, 2013.

\bibitem[KGE14]{KGE14}
M.~Kliesch, D.~Gross, and J.~Eisert.
\newblock Matrix-product operators and states: {NP}-hardness and
  undecidability.
\newblock {\em Physical Review Letters}, 113:160503, 2014.

\bibitem[Kit03]{Kit03}
A.~Kitaev.
\newblock Fault-tolerant quantum computation by anyons.
\newblock {\em Annals of Physics}, 303(1):2--30, 2003.

\bibitem[MB12]{MB12}
J.~Morton and J.~D. Biamonte.
\newblock Undecidability in tensor network states.
\newblock {\em Physical Review A}, 86:030301(R), 2012.

\bibitem[MS72]{MS72}
A.~Meyer and L.~Stockmeyer.
\newblock The equivalence problem for regular expressions with squaring
  requires exponential time.
\newblock In {\em Proceedings of the 13th Symposium on Foundations of Computer
  Science}, pages 125--129, 1972.

\bibitem[{\"{O}}R95]{OR95}
S.~{\"{O}}stlund and S.~Rommer.
\newblock Thermodynamic limit of density matrix renormalization.
\newblock {\em Physical Review Letters}, 75:3537--3540, 1995.

\bibitem[Pen67]{P67}
R.~Penrose.
\newblock The theory of quantized directions.
\newblock 1967.

\bibitem[PGSGG{\etalchar{+}}10]{PSGWC10}
D.~P\'{e}rez-Garc\'{i}a, M.~Sanz, C.~E. Gonz\'{a}lez-Guill\'{e}n, M.~M. Wolf,
  and J.~I. Cirac.
\newblock Characterizing symmetries in a projected entangled pair state.
\newblock {\em New Journal of Physics}, 12:025010, 2010.

\bibitem[R{\"{O}}97]{RO97}
S.~Rommer and S.~{\"{O}}stlund.
\newblock Class of ansatz wave functions for one-dimensional spin systems and
  their relation to the density matrix renormalization group.
\newblock {\em Physical Review B}, 55:2164--2181, 1997.

\bibitem[Sch11]{S11}
N.~Schuch.
\newblock Complexity of commuting {H}amiltonians on a square lattice of qubits.
\newblock {\em Quantum Information \& Computation}, 11:901--912, 2011.

\bibitem[SCPG10]{SCP10}
N.~Schuch, I.~Cirac, and D.~P\'{e}rez-Garc\'{i}a.
\newblock {PEPS} as ground states: {D}egeneracy and topology.
\newblock {\em Annals of Physics}, 325(10):2153--2192, 2010.

\bibitem[SWVC07]{SWVC07}
N.~Schuch, M.~Wolf, F.~Verstraete, and J.~I. Cirac.
\newblock The computational complexity of {PEPS}.
\newblock {\em Physical Review Letters}, 98:140506, 2007.

\bibitem[Tod91]{T91}
S.~Toda.
\newblock {PP} is as hard as the {P}olynomial-{T}ime {H}ierarchy.
\newblock {\em SIAM Journal on Computing}, 20:865--877, 1991.

\bibitem[VC04]{VC04}
F.~Verstraete and J.~I. Cirac.
\newblock Renormalization algorithms for quantum-many body systems in two and
  higher dimensions.
\newblock Available at arXiv.org e-Print cond-mat/0407066, 2004.

\bibitem[Vid07]{V07}
G.~Vidal.
\newblock Entanglement renormalization.
\newblock {\em Physical Review Letters}, 99:220405, 2007.

\bibitem[Vid08]{V08}
G.~Vidal.
\newblock A class of quantum many-body states that can be efficiently
  simulated.
\newblock {\em Physical Review Letters}, 101:110501, 2008.

\bibitem[VMC08]{VMC08}
F.~Verstraete, V.~Murg, and I.~Cirac.
\newblock Matrix product states, projected entangled pair states, and
  variational renormalization group methods for quantum spin systems.
\newblock {\em Advances in Physics}, 57(2):143--224, 2008.

\bibitem[VPC04]{VPC04}
F.~Verstraete, D.~Porras, and J.~.I. Cirac.
\newblock Density matrix renormalization group and periodic boundary
  conditions: A quantum information perspective.
\newblock {\em Physical Review Letters}, 93:227205, 2004.

\bibitem[VWPGC06]{VWPC06}
F.~Verstraete, M.~Wolf, D.~P\'{e}rez-Garc\'{i}a, and J.~I. Cirac.
\newblock Projected entangled states: Properties and applications.
\newblock {\em International Journal of Modern Physics B}, 20:5142, 2006.

\bibitem[Whi92]{W92}
S.~R. White.
\newblock Density matrix formulation for quantum renormalization groups.
\newblock {\em Physical Review Letters}, 69:2863--2866, 1992.

\bibitem[Whi93]{W93}
S.~R. White.
\newblock Density-matrix algorithms for quantum renormalization groups.
\newblock {\em Physical Review B}, 48:10345--10356, 1993.

\bibitem[WVS{\etalchar{+}}09]{WVSCD09}
A.~Weichselbaum, F~Verstraete, U~Schollw\"{o}ck, J.~I. Cirac, and J.~von Delft.
\newblock Variational matrix product state approach to quantum impurity models.
\newblock {\em Physical Review B}, 80:165117, 2009.

\end{thebibliography}
\end{document}